\newtheorem{theorem}{Theorem}
\newtheorem{lemma}[theorem]{Lemma}
\newtheorem{corollary}[theorem]{Corollary}
\newtheorem{proposition}{Proposition}
\newtheorem{definition}{Definition}
\newtheorem{remark}{Remark}
\def\squareforqed{\hbox{\rlap{$\sqcap$}$\sqcup$}}
\def\qed{\ifmmode\squareforqed\else{\unskip\nobreak\hfil
\penalty50\hskip1em\null\nobreak\hfil\squareforqed
\parfillskip=0pt\finalhyphendemerits=0\endgraf}\fi}
\def\endenv{\ifmmode\;\else{\unskip\nobreak\hfil
\penalty50\hskip1em\null\nobreak\hfil\;
\parfillskip=0pt\finalhyphendemerits=0\endgraf}\fi}
\newenvironment{proof}{\noindent \textbf{{Proof~} }}{\hfill $\blacksquare$}
\mathchardef\ordinarycolon\mathcode`\:
\def\vcentcolon{\mathrel{\mathop\ordinarycolon}}
\definecolor{shadecolor}{rgb}{0.9,0.9,0.9}
\definecolor{darkblue}{RGB}{0,76,156}
\definecolor{darkkblue}{RGB}{0,0,153}
\definecolor{blue2}{RGB}{102,178,255}
\definecolor{darkred}{RGB}{195,0,0}
\newmdenv[skipabove=7pt,
skipbelow=7pt,
backgroundcolor=darkblue!15,
innerleftmargin=5pt,
innerrightmargin=5pt,
innertopmargin=5pt,
leftmargin=0cm,
rightmargin=0cm,
innerbottommargin=5pt,
linewidth=1pt]{tBox}
\newmdenv[skipabove=7pt,
skipbelow=7pt,
backgroundcolor=blue2!25,
innerleftmargin=5pt,
innerrightmargin=5pt,
innertopmargin=5pt,
leftmargin=0cm,
rightmargin=0cm,
innerbottommargin=5pt,
linewidth=1pt]{dBox}
\newmdenv[skipabove=7pt,
skipbelow=7pt,
backgroundcolor=darkred!15,
innerleftmargin=5pt,
innerrightmargin=5pt,
innertopmargin=5pt,
leftmargin=0cm,
rightmargin=0cm,
innerbottommargin=5pt,
linewidth=1pt]{rBox}
\newcommand{\nc}{\newcommand}
\nc{\bra}[1]{\langle#1|}
\nc{\ket}[1]{|#1\rangle}
\nc{\ketbra}[2]{|#1\rangle\!\langle#2|}
\nc{\braket}[2]{\langle#1|#2\rangle}
\DeclarePairedDelimiter{\abs}{\lvert}{\rvert}
\nc{\proj}[1]{| #1\rangle\!\langle #1 |}
\nc{\avg}[1]{\langle#1\rangle}
\nc{\rank}{\operatorname{Rank}}
\nc{\smfrac}[2]{\mbox{$\frac{#1}{#2}$}}
\nc{\tr}{\operatorname{Tr}}
\nc{\ox}{\otimes}
\nc{\dg}{\dagger}
\nc{\dn}{\downarrow}
\nc{\cA}{{\cal A}}
\nc{\cB}{{\cal B}}
\nc{\cC}{{\cal C}}
\nc{\cD}{{\cal D}}
\nc{\cE}{{\cal E}}
\nc{\cF}{{\cal F}}
\nc{\cG}{{\cal G}}
\nc{\cH}{{\cal H}}
\nc{\cI}{{\cal I}}
\nc{\cJ}{{\cal J}}
\nc{\cK}{{\cal K}}
\nc{\cL}{{\cal L}}
\nc{\cM}{{\cal M}}
\nc{\cN}{{\cal N}}
\nc{\cO}{{\cal O}}
\nc{\cP}{{\cal P}}
\nc{\cQ}{{\cal Q}}
\nc{\cR}{{\cal R}}
\nc{\cS}{{\cal S}}
\nc{\cT}{{\cal T}}
\nc{\cU}{{\cal U}}
\nc{\cV}{{\cal V}}
\nc{\cX}{{\cal X}}
\nc{\cY}{{\cal Y}}
\nc{\cZ}{{\cal Z}}
\nc{\cW}{{\cal W}}
\nc{\csupp}{{\operatorname{csupp}}}
\nc{\qsupp}{{\operatorname{qsupp}}}
\nc{\var}{{\operatorname{var}}}
\nc{\rar}{\rightarrow}
\nc{\lrar}{\longrightarrow}
\nc{\polylog}{{\operatorname{polylog}}}
\nc{\wt}{{\operatorname{wt}}}
\nc{\supp}{{\operatorname{supp}}}
\def\a{\alpha}
\def\e{\epsilon}
\def\x{\xi}
\nc{\RR}{{{\mathbb R}}}
\nc{\CC}{{{\mathbb C}}}
\nc{\FF}{{{\mathbb F}}}
\nc{\NN}{{{\mathbb N}}}
\nc{\ZZ}{{{\mathbb Z}}}
\nc{\PP}{{{\mathbb P}}}
\nc{\QQ}{{{\mathbb Q}}}
\nc{\UU}{{{\mathbb U}}}
\nc{\EE}{{{\mathbb E}}}
\nc{\id}{{\operatorname{id}}}
\nc{\CHSH}{{\operatorname{CHSH}}}
\newcommand{\Op}{\operatorname}
\nc{\rU}{\mbox{U}}
\nc{\ob}[1]{#1}
\nc{\SEP}{{\text{\rm SEP}}}
\nc{\NS}{{\text{\rm NS}}}
\nc{\LOCC}{{\text{\rm LOCC}}}
\nc{\PPT}{{\text{\rm PPT}}}
\nc{\EXT}{{\text{\rm EXT}}}
\nc{\Sym}{{\operatorname{Sym}}}
\nc{\ERLO}{{E_{\text{r,LO}}}}
\nc{\ERLOCC}{{E_{\text{r,LOCC}}}}
\nc{\ERPPT}{{E_{\text{r,PPT}}}}
\nc{\ERLOCCinfty}{{E^{\infty}_{\text{r,LOCC}}}}
\nc{\Aram}{{\operatorname{\sf A}}}
\def\grd@save@target#1{%
  \def\grd@target{#1}}
\def\grd@save@start#1{%
  \def\grd@start{#1}}
\tikzset{
  grid with coordinates/.style={
    to path={%
      \pgfextra{%
        \edef\grd@@target{(\tikztotarget)}%
        \tikz@scan@one@point\grd@save@target\grd@@target\relax
        \edef\grd@@start{(\tikztostart)}%
        \tikz@scan@one@point\grd@save@start\grd@@start\relax
        \draw[minor help lines,magenta] (\tikztostart) grid (\tikztotarget);
        \draw[major help lines] (\tikztostart) grid (\tikztotarget);
        \grd@start
        \pgfmathsetmacro{\grd@xa}{\the\pgf@x/1cm}
        \pgfmathsetmacro{\grd@ya}{\the\pgf@y/1cm}
        \grd@target
        \pgfmathsetmacro{\grd@xb}{\the\pgf@x/1cm}
        \pgfmathsetmacro{\grd@yb}{\the\pgf@y/1cm}
        \pgfmathsetmacro{\grd@xc}{\grd@xa + \pgfkeysvalueof{/tikz/grid with coordinates/major step}}
        \pgfmathsetmacro{\grd@yc}{\grd@ya + \pgfkeysvalueof{/tikz/grid with coordinates/major step}}
        \foreach \x in {\grd@xa,\grd@xc,...,\grd@xb}
        \node[anchor=north] at (\x,\grd@ya) {\pgfmathprintnumber{\x}};
        \foreach \y in {\grd@ya,\grd@yc,...,\grd@yb}
        \node[anchor=east] at (\grd@xa,\y) {\pgfmathprintnumber{\y}};
      }
    }
  },
  minor help lines/.style={
    help lines,
    step=\pgfkeysvalueof{/tikz/grid with coordinates/minor step}
  },
  major help lines/.style={
    help lines,
    line width=\pgfkeysvalueof{/tikz/grid with coordinates/major line width},
    step=\pgfkeysvalueof{/tikz/grid with coordinates/major step}
  },
  grid with coordinates/.cd,
  minor step/.initial=.2,
  major step/.initial=1,
  major line width/.initial=2pt,
}
\def\problem@s{}
\newcounter{problems@cnt}
\newcommand{\allproblems}{\problem@s}
\pgfplotsset{compat=1.18}
\definecolor{tensorblue}{rgb}{0.8,0.9,1}
\tikzset{ten/.style={fill=tensorblue}}
\begin{document}
\title{Circuit Knitting Faces Exponential Sampling Overhead Scaling Bounded by Entanglement Cost}
\author{Mingrui Jing}
\affiliation{Thrust of Artificial Intelligence, Information Hub, The Hong Kong University of Science and Technology (Guangzhou), Guangzhou 511453, China}
\author{Chengkai Zhu}
\author{Xin Wang}
\email{felixxinwang@hkust-gz.edu.cn}
\affiliation{Thrust of Artificial Intelligence, Information Hub, The Hong Kong University of Science and Technology (Guangzhou), Guangzhou 511453, China}

\begin{abstract}
Circuit knitting, a method for connecting quantum circuits across multiple processors to simulate nonlocal quantum operations, is a promising approach for distributed quantum computing. While various techniques have been developed for circuit knitting, we uncover fundamental limitations to the scalability of this technology. We prove that the sampling overhead of circuit knitting is exponentially lower bounded by the exact entanglement cost of the target bipartite dynamic, even with the asymptotic amount of resources using the parallel cut strategy. Specifically, we prove that the regularized sampling overhead assisted with local operations and classical communication (LOCC), of any bipartite quantum channel is lower bounded by the exponential of its exact entanglement cost under separable preserving operations. Furthermore, we develop the faithful lower bounds for the regularized sampling overhead based on channels' $\kappa$-entanglement and max-Rains information, providing efficiently computable benchmarks. Our work reveals a profound connection between virtual quantum information processing via quasi-probability decomposition and quantum Shannon theory, highlighting the critical role of entanglement in distributed quantum computing.
\end{abstract}

\maketitle


\section{Introduction}
Quantum computing, particularly through the lens of Distributed Quantum Computing (DQC)~\cite{cirac1999distributed}, is on the cusp of fundamentally altering the landscape of information processing by harnessing the unique properties of quantum mechanics. DQC aims to synergize multiple quantum processors into a cohesive, more potent system, addressing the inherent limitations in physical connectivity~\cite{van2016path} and laying the groundwork for advanced quantum networks and algorithms~\cite{parekh2021quantum,cacciapuoti2019quantum}. This endeavour is pivotal in the Noisy Intermediate-Scale Quantum (NISQ)~\cite{preskill2018quantum}, or even in the Early fault-tolerant quantum computing (EFTQC)~\cite{Katabarwa2024early} era, promising significant advancements in scalable quantum computation.

A notable contribution towards realizing near-term DQC is the concept of \textit{circuit knitting}~\cite{Bravyi2016,Piveteau2022circuit}. This approach not only facilitates the simulation of quantum many-body dynamics~\cite{Peng2020,Eddins2022,gentinetta2023overhead} but is also central in employing the quasi-probability decomposition (QPD) for mimicking bipartite quantum channels~\cite{Mitarai2021constructing,Mitarai2021overhead,Piveteau2022quasiprobability}. By operating a linear combination of local operation and classical communication (LOCC) channels separately on different quantum systems (FIG.~\ref{fig:limitation_knitting}), circuit knitting approximates the output of an intended quantum channel, suggesting a modular strategy for constructing expansive quantum circuits. This technique has spurred extensive research into quantum error mitigation~\cite{Temme2017,Endo2018,Kandala2019,Piveteau2021EM,singh2023experimental}, simulation algorithms~\cite{Pashayan2015,Seddon2021,gentinetta2023overhead}, and quantum resource theory~\cite{Howard2017,Seddon2019quantifying,Heinrich2019robustness}, highlighting its utility and innovative power.

Despite its potential, circuit knitting confronts substantial scalability and implementation challenges, primarily due to the prohibitive sampling costs associated with the QPD technique~\cite{Piveteau2022circuit}. These costs, which can exponentially increase with the circuit’s complexity, pose significant hurdles to the method's scalability, calling into question its feasibility for large-scale applications. Moreover, the precise factors influencing these costs remain poorly understood, complicating efforts to refine the technique.

Entanglement is a cornerstone resource in quantum computation and information, vital for processes like quantum teleportation~\cite{Gottesman1999demonstrating}, accelerated quantum algorithms~\cite{Ekert1998quantum}, and quantum error correction~\cite{Bennett1996}. The need for efficient entanglement utilization has led to the development of the exact (parallel) entanglement cost concept for bipartite quantum channels, which quantifies the minimal rates of maximally entangled states required for simulating multiple uses of error-free operations~\cite{Bauml2019resource,Gour2021entanglement,Gour2020}. This measure has significantly impacted studies in entanglement manipulation, channel resource theory, and information amortization~\cite{Bauml2019resource,Gour2021entanglement,Kaur2017amortized}, prompting us to inquiry into its relationship with the sampling overhead in circuit knitting.

Originally introduced to set upper bounds on channel capacities~\cite{Bennett1996}, entanglement cost has been extensively analyzed, especially regarding its ties to entanglement of formation and channel simulation~\cite{Berta2013,Wilde2018}. Despite its theoretical importance in channel simulation, the practical significance of entanglement cost in quantum computing and many-body physics remains less clear. Challenges in computing entanglement cost for general bipartite channels arise from complexities in superchannels, causal ordering, and regularization~\cite{Chiribella2008quantum,Bauml2019resource}. However, recent advancements in computable bounds have spurred further research in entanglement theory and quantum resource theories, enhancing our understanding of quantum information processes~\cite{Wang2017irreversibility,Gour2021entanglement,Wang2023,Lami2023no}.

\subsection{Summary of results}
In this work, we try to establish the quantitative relationship between the two ways of channel simulations (Fig~\ref{fig:limitation_knitting}). We reveal the fundamental limitations of circuit knitting by theoretically establishing an exponential lower bound of the sampling cost with respect to the exact entanglement cost of the bipartite channel to be knitted. Technically, we further design efficiently computable lower bounds for the zero-error asymptotic sampling cost of any bipartite channels,
which can be efficiently estimated via semidefinite programming (SDP). Particularly, we show the following 
\begin{equation}
\begin{aligned}
    & \gamma_{\LOCC}^{\infty}(\cN_{AB\rightarrow A'B'}) \geq 2^{E^{\SEP}_{C,0}(\cN_{AB\rightarrow A'B'})},\\
    & \gamma_{\PPT}^{\infty}(\cN_{AB\rightarrow A'B'}) \geq 2^{E^{\PPT}_{C,0}(\cN_{AB\rightarrow A'B'})},
\end{aligned}
\end{equation}
where $\gamma_{\LOCC}^{\infty}(\cN), \gamma_{\PPT}^{\infty}(\cN)$ are LOCC-assisted and PPT-assisted regularized sampling cost of $\cN_{AB\rightarrow A'B'}$, respectively. $E^{\SEP}_{C,0}(\cN_{AB\rightarrow A'B'})$ and $E^{\PPT}_{C,0}(\cN_{AB\rightarrow A'B'})$ are the exact entanglement cost of $\cN_{AB\rightarrow A'B'}$ via PPT bipartite operations and \textit{separable} (SEP) bipartite operations, respectively~\cite{Gour2021entanglement,Bauml2019resource}. Notably, our results apply to general bipartite channels instead of specific bipartite unitaries and break through the barrier of \textit{KAK} decomposition~\cite{Tucci2005introduction} mainly used in previous literature~\cite{Mitarai2021constructing,Piveteau2022circuit}. We continue to study noisy two- and three-qubit gates, such as CNOT, Toffoli, and control SWAP gates, and showcase the different costs via different partitioning.

\begin{figure}[t]
    \centering
    \includegraphics[width=\linewidth]{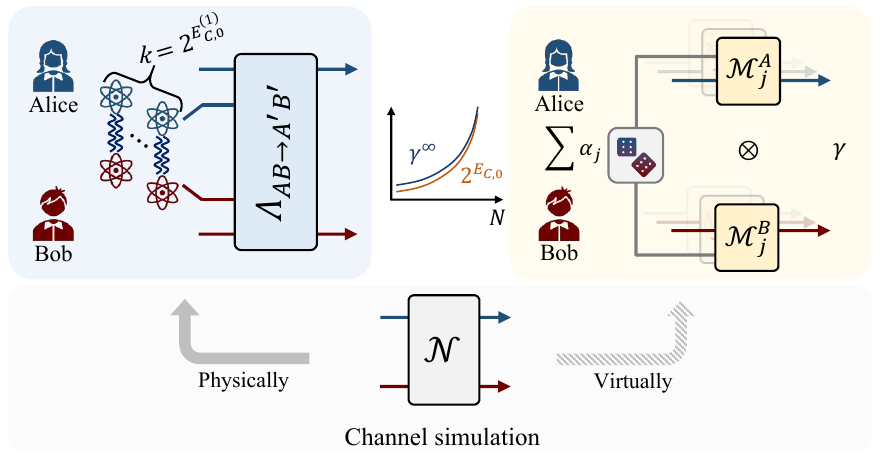}
    \caption{\textbf{Illustration of the main result: the relationship between the exact (parallel) entanglement cost $E_{C,0}$ of implementing a bipartite quantum channel $\cN$ (left) and the regularized sampling overhead $\gamma^{\infty}$ for simulating $\cN$ via circuit knitting (right)}. The one-shot exact entanglement cost $E_{C,0}^{(1)}$ measures the minimum entanglement required for Alice and Bob to exactly realize $\cN$. The implementation uses a bipartite channel $\Lambda_{AB \rightarrow A'B'}$ from LOCC, SEP, or PPT operations. The $\gamma$-factor indicates the minimal overhead for simulating $\cN$ with QPD. Our results highlight the limitations of the circuit knitting technique, establishing an exponential lower bound of $\gamma^{\infty}$ relative to $E_{C,0}$ for simulating numerous instances of $\cN$.}
    \label{fig:limitation_knitting}
\end{figure}

Our studies give a deeper understanding of the scalability problem for circuit knitting, indicating that, even with shared bound entanglement~\cite{Horodecki1998mixed,Horodecki1999bound,Gaida2023seven}, the sampling cost might still not be reduced from the exponential-growing catastrophe~\cite{Piveteau2022circuit}. Despite the fundamental limitations of circuit knitting from this work in terms of quantum Shannon theory, our discussion laterally implies the potential quantum advantages of entangling channels, which might not be simply replaced by the sampling simulation techniques.

This paper is structured as follows. Sec.~\ref{sec:preliminary} provides preliminaries about quantum channel simulation, introducing two perspectives of bipartite quantum channel simulation, including the QPD method and the conventional method using LOCC and consuming entanglement resources. Sec.~\ref{sec:sampling_cost_and_bipartite_channels} exhibits the main results from our investigation, including the exponential lower bounds of regularized $\gamma$-factor regarding the entanglement cost of bipartite channels and computable quantum information measures. In Sec.~\ref{sec:fundamental_lim_circuit_knitting}, the arguments on the fundamental limitation of circuit knitting will be given provided the PPT-assisted QPD and our lower bound results. The paper concludes with a summary and outlooks for future research in Sec.~\ref{sec:conclusion}.

\section{Preliminaries}\label{sec:preliminary}

\subsection{Simulation of bipartite quantum channels}\label{subsec:simulation_bipartite_qc}
The simulation of a bipartite quantum channel can be understood from two perspectives. The first one virtually simulates the channel by sampling quantum operations and post-processing, which can be used to estimate the expectation value of any physical observable after the state passes through the channel. The second one utilizes some entangled states between the two parties to physically realize the bipartite channel via entanglement non-generating operations. 

\paragraph{Simulation via Monte-Carlo method.}
Let us begin our discussion on the channel simulation task by sampling and post-processing. Given any bipartite quantum channel $\cN_{AB\rightarrow A'B'}$, we target to find a set of real coefficients $\{\alpha_j\}$ and a set of LOCC $\{\cM_j\}$ between $\cL(\cH_{AA'})$ and $\cL(\cH_{BB'})$ such that $\cN = \sum_j \alpha_j \cM_j$. The set $\{(\alpha_j, \cM_j)\}$ forms a quasiprobability decomposition (QPD) of the channel $\cN$~\cite{Mitarai2021constructing,Mitarai2021overhead,Piveteau2022quasiprobability}. Such a QPD protocol of $\cN$ enables the simulation method of $\cN$ via randomly sampling $\cM_j$'s with their regarding probability $|\alpha_j|/\kappa$ where $\kappa = \sum_j \abs{\a_j}$. That is, for any (normalized) observable $O$ and input quantum state $\rho$, the expectation value of $O$ with respect to $\cN(\rho)$ can be extracted by the Monte Carlo method. One can also extend the region of the decomposition channels to \textit{trace non-increasing} maps~\cite{Piveteau2022circuit}, which may yield extra samples during physical implementations for the ancillary post-processing.

By \textit{Hoeffding's inequality}, we require at least $\cO(\frac{\kappa^2}{\delta^2}\ln{\frac{2}{\epsilon}})$ samples of $\cM_j$ to achieve the estimation error $\delta$ with a probability $1-\epsilon$. Therefore, the sampling overhead defined as $\kappa$ determines the number of samples demanded to simulate $\cN$. Among all possible LOCC-assisted QPD of $\cN$, we could quantify the sampling cost of simulating $\cN$ by identifying the minimum sampling overhead, denoted as $\gamma_{\LOCC}(\cN)$ which also quantifies the nonlocality of given bipartite quantum channel~\cite{Piveteau2022circuit}. 

Notice that the $\gamma$-factor is closely related to the robustness of the quantum channel regarding specific channel sets and has been discussed in Ref.~\cite{Harrow2024optimal,Kim2021one}. The definition of QPD for circuit knitting concerning other operation sets has been discussed in~\cite{Piveteau2022circuit,Schmitt2023cutting}. The technique also inspires the `twisted channel' framework for general measurement in Ref.~\cite{Zhao2023power}.

\paragraph{Bipartite channel simulation via LOCC and entanglement.}
Compared to the QPD sampling simulation, the entanglement cost of a point-to-point quantum channel, i.e., the channel with single input and output systems, is defined as the minimal asymptotic rate of the maximally entangled state required to simulate $n$-use of the channel assisted with LOCC~\cite{Berta2013}. The framework was then extended to the more complex bipartite scenarios based on the channel simulation tasks~\cite{Bauml2019resource,Gour2020,Gour2021entanglement} embedded in the framework of quantum comb~\cite{Chiribella2008quantum}. To distinguish with the point-to-point cases, the entanglement cost of bipartite channels is more complicated, containing both sequential and parallel schemes. The former allows the amortization of information to enhance the simulation, while the other naturally forms a particular case of the sequential setting by consuming independent resources. In Refs.~\cite{Bauml2019resource,Gour2021entanglement}, both independent works have developed efficiently computable lower bounds of the costs via the relaxation towards PPT superchannels. 

Given a bipartite quantum channel $\cN_{AB\rightarrow A'B'}$, one can define the entanglement cost of $\cN$ via the task of channel simulations~\cite{Berta2013,Bauml2019resource,Wilde2018,Gour2021entanglement}. Two types of schemes, including the parallel and the adaptive simulation scheme, are used to define the entanglement cost of bipartite channels. Under the parallel framework, the goal is to simulate $\cN^{\ox n}$ simultaneously using some higher-dimensional free channels consuming maximally entangled states. Let us consider a simulation protocol $\cP_{A^n B^n \rightarrow A'^n B'^n}$ such that for any input state $\rho_{A^n B^n}$, we have,
\begin{equation}\label{eq:protocol_of_bipartite_simulation}
    \cP(\rho_{A^n B^n})\coloneqq \Lambda_{A^n B^n \Bar{A}\Bar{B} \rightarrow A'^n B'^n}(\rho_{A^n B^n} \ox \Phi_{\Bar{A}\Bar{B}}),
\end{equation}
where $\Lambda$ here is a free channel, e.g., LOCC-, PPT- or SEP-preserving channels. For $\varepsilon \geq 0$, we say the protocol is $\varepsilon$-distinguishable from $\cN^{\ox n}$ if,
\begin{equation}\label{eq:simulation_protocol_diamond_distance}
    \|\cN_{AB\rightarrow A'B'}^{\ox n} - \cP_{A^n B^n \rightarrow A'^n B'^n}\|_{\diamond} \leq \varepsilon,
\end{equation}
where $\|\cdot\|_{\diamond}$ is the diamond norm between linear maps~\cite{Kitaev1997QuantumCA}. Based on the channel simulation protocol, we can first define the one-shot entanglement cost of a bipartite channel,

\begin{definition}
    Consider a bipartite system with parties Alice and Bob. Let $\varepsilon \geq 0$ and $\cN_{AB\rightarrow A'B'}$ be a bipartite channel. The one-shot entanglement cost of $\cN$ with error $\varepsilon$, regarding the free operation set $\cF$, is defined as
    \begin{equation}
    \begin{aligned}
        E_{\cF, C,\varepsilon}^{(1)}(\cN) := &\min\big\{\log k \, :\, \big\|\cN_{AB\rightarrow A'B'} - \\
        &\Lambda_{A B \Bar{A}\Bar{B} \rightarrow A' B'}( \ \cdot \ox \Phi_{\Bar{A}\Bar{B}}^{\ox k})\big\|_{\diamond}\leq \varepsilon, \, k\in \mathbb{N} \big\},
    \end{aligned}
    \end{equation}
    where the minimization ranges over all $\Lambda$ operations in $\cF$ between Alice and Bob.
\end{definition}

Therefore, the (parallel) $\cF$-entanglement cost of bipartite channels can be defined as the asymptotic infimum rate of the entanglement resources to realize the protocol with zero error, i.e.,
\begin{equation}\label{eq:asymptotic_cost_bipartite_channel}
    E^{\cF}_C(\cN) \coloneqq \lim_{\varepsilon \rightarrow 0} \limsup_{n\rightarrow \infty} \frac{1}{n}E_{\cF, C, \varepsilon}^{(1)}(\cN^{\ox n}).
\end{equation}
We can also define the exact (parallel) $\cF$-entanglement cost of $\cN$ as,
\begin{equation}\label{eq:exact_cost_bipartite_channel}
    E^{\cF}_{C,0}(\cN) \coloneqq \limsup_{n\rightarrow \infty} \frac{1}{n}E_{\cF, C, \varepsilon=0}^{(1)}(\cN^{\ox n}),
\end{equation}
which forms an upper bound of the asymptotic entanglement cost. In particular, if $\cF = \LOCC$, we could omit to label `$\cF$' in the symbol of entanglement cost. Refer to the simulation of channels, in the single-shot scenario, one requires $\Phi_{\Bar{A}\Bar{B}} = \Phi_{\Bar{A}_0\Bar{B}_1} \ox \Phi_{ \Bar{A}_1\Bar{B}_0}$ and the free operation to simulate the SWAP operation $\cS$ where $\Bar{A} = \Bar{A}_0\ox\Bar{A}_1$ and $\Bar{B} = \Bar{B}_0\ox\Bar{B}_1$. In the parallel setting, one can, {therefore,} treat the SWAP operation as the maximal resource for dynamical entanglement~\cite{Bauml2019resource,Gour2020}. Notice that, taking {the dimension} $|B| = |A'| = 1$, the above definition is naturally reduced to the entanglement cost of the point-to-point channel. Another definition of the entanglement cost of bipartite channels, namely the sequential or adaptive cost, has been also generally discussed in Refs.~\cite{Bauml2019resource,Gour2021entanglement}, forming a lower bound of the parallel entanglement cost.

\section{Sampling cost is exponentially lower bounded by entanglement cost
}\label{sec:sampling_cost_and_bipartite_channels}

Despite many efforts that have been made recently, the computation of the $\gamma$-factor for knitting arbitrary bipartite quantum channels via LOCC stays challenging. Besides, the relationship between the sampling cost and the entanglement theory of bipartite channels has not been completely built. {The $\gamma$-factor was proven to be captured by the robustness of entanglement~\cite{Piveteau2022circuit}, which is initially defined by Vidal and Tarrach~\cite{vidal1999robustness} in the static entanglement scenario, and further extended to general resources~\cite{Takagi2019general}. It serves as a bridge between both the limitation of entanglement distillation~\cite{regula2021fundamental} and the entanglement manipulation through finite calls~\cite{Kim2021one}, and, recently, inspires computable lower bounds of dynamical entanglement cost~\cite{Lami2023computable}. In particular, we have,
\begin{equation}\label{eq:relation_gamma_Rob}
    \gamma_{\cF}(\cN_{AB\rightarrow A'B'}) = 1+2R^{\cF}_s(\cN_{AB\rightarrow A'B'})
\end{equation}
where $R^{\cF}_s(\cdot)$ is the \textit{standard robustness of a quantum channel based on the free operation set $\cF$}.} In this section, we first introduce the settings of circuit knitting and the definitions of the exact regularized $\gamma$-factor from the point of resource theory. In subsection~\ref{subsec:ppt_exact_entanglement_cost_lb_for_sampling_cost}, we introduce the PPT-assisted $\gamma$-factor and derive the exponential lower bound of its regularized form with respect to the non-positive-partial-transpose (NPT) exact entanglement cost, and provide an efficiently computable upper bound of it via SDP. In subsection~\ref{subsec:sep_exact_entanglement_cost_lb_for_sampling_cost}, we further tighten the two sides of the theorem towards the LOCC scenario by proving the similar relationship for separable channels.

With the optimal QPD of $\cN$, we can imagine a situation of a quantum circuit containing $n$ identical bipartite channels $\cN$ applied to the two parties of the quantum circuit which is used in many quantum computing tasks. For example, in the estimation of the ground state energy of the Hamiltonian via variational ansatz~\cite{Cerezo2021variational} and simulation of quantum many-body dynamics~\cite{Harrow2024optimal}. One can avoid generating global entanglement by individually decomposing those entangling operations across the two subcircuits. Unfortunately, this would result in an exponential growth in sampling overhead~\cite{Piveteau2022circuit}.

From the perspective of quantum resource theory, Ref.~\cite{Piveteau2022circuit} has introduced \textit{parallel cutting} for circuit knitting to reduce sampling cost effectively. Consider a circuit with $n$ multiple gates $(U_j)_{j=1}^n$, which we wish to simultaneously. In that case, the entire cut can be observed as a single cut on the total circuit $U_{tot} = \bigotimes_{j=1}^n U_j$ via some pre-processes using local swap operations. 
We can define the effective sampling overhead per gate of cutting~\cite{Piveteau2022circuit} as,
\begin{equation}\label{eq:effective_gamma_factor}
    \gamma^{(n)}_{\cF}(\cN_{AB\rightarrow A'B'}) \coloneqq (\gamma_{\cF_n}(\cN_{AB\rightarrow A'B'}^{\otimes n}))^{1/n},
\end{equation}
where $\cF_n$ is the free operation set of $n$-shot. The optimal strategy from joint cut indicates the limitation of cutting $U_{tot}$ into QPD and the total number of samples required can gain significant reduction from $\prod_{j=1}^n \gamma_{\cF}(\cU_j)^2$ to $\prod_{j=1}^n \gamma^{(n)}_{\cF}(\cU_j)^2$. Despite the subtleties that determining $U_{tot}$ itself is as challenging as implementing the entire circuit directly~\cite{Schmitt2023cutting}, we can still, in theory, define the regularized exact $\gamma$-factor with respect to the free operation set $\cF$ by taking the limit $n\rightarrow \infty$ for any bipartite channel $\cN$, i.e., $\gamma_{\cF}^{\infty}(\cN)\coloneqq \lim_{n\rightarrow \infty} \gamma_{\cF}^{(n)}(\cN)$, which quantifies the fundamental limitation on the sampling cost from simulating $\cN$ using parallel cutting.

\subsection{PPT-assisted sampling cost and NPT exact entanglement cost}\label{subsec:ppt_exact_entanglement_cost_lb_for_sampling_cost}
In the NPT resource theory, PPT channels are commonly applied in studies on the properties and limitations of LOCCs. Based on Refs.~\cite{Bauml2019resource,Gour2021entanglement,Gour2020}, one can define the exact PPT-entanglement cost of any bipartite channel $\cN_{AB\rightarrow A'B'}$ by letting $\cF = \PPT$ in Eq.~\eqref{eq:exact_cost_bipartite_channel} and it is known that $E^{\PPT}_{C,0}(\cN_{AB\rightarrow A'B'})$ is smaller than or equal to the max-logarithmic negativity or generalized $\kappa$-entanglement $LN_{\max}\left(\cN_{AB\rightarrow A'B'}\right)$, i.e., $E^{\PPT}_{C,0}(\cN_{AB\rightarrow A'B'}) \le LN_{\max }\left(\cN_{AB\rightarrow A'B'}\right)$, and 
\begin{equation}
\begin{aligned}
    LN_{\max }&\left(\cN\right)=\log\inf \Big\{\max \left\{\left\|P_{AB}\right\|_{\infty},\left\|P_{AB}^{T_{B}}\right\|_{\infty}\right\}:\\
    &-P_{ABA'B'}^{T_{BB'}} \leq\left(J_{ABA'B'}^{\cN}\right)^{T_{BB'}} \leq P_{ABA'B'}^{T_{BB'}}\Big\},
\end{aligned}
\end{equation}
where $P_{ABA'B'} \geq 0$ is some optimization variable. The max-logarithmic negativity of a bipartite channel can be computed via SDP~\cite{Gour2021entanglement}, which is a generalization of $\kappa$-entanglement~\cite{Wang2020c} for the bipartite channels.
It is called max-logarithmic negativity since it can be understood as the maximum case of the $\alpha$-logarithmic negativity~\cite{Wang2020alpha}.

Our first result is to establish the relationship between $\gamma_{\PPT}(\cN_{AB\rightarrow A'B'})$ and the exact PPT-entanglement cost of the channel.
\begin{theorem}
    For a bipartite quantum channel $\cN_{AB\rightarrow A'B'}$, the PPT-assisted regularized $\gamma$-factor is exponentially lower bounded by,
    \begin{equation}
        \gamma^{\infty}_{\PPT}(\cN_{AB\rightarrow A'B'}) \geq 2^{LN_{\max}(\cN_{AB\rightarrow A'B'})},
    \end{equation} 
    where $LN_{\max}(\cN) \geq E^{\PPT}_{C,0}(\cN)$, and $E^{\PPT}_{C,0}(\cN)$ denotes the exact (parallel) PPT-entanglement cost of $\cN$.
\end{theorem}
The detailed proof is in Appendix~\ref{appendix:proof_of_regularized_ppt_sampling_cost}. We remark that the last inequality is not an equality as presented in~\cite{Gour2020} due to the non-additivity of max-logarithmic negativity or $\kappa$-entanglement~\cite{LamiRegula23private,WW2024}. {Notice that the max-logarithmic negativity can be evaluated via SDP, hence, forming a computable lower bound of regularized PPT-assisted $\gamma$-factor.}

We can also derive another efficiently computable lower bound of $\gamma^{\infty}_{\PPT}(\cN)$ via the max-Rains information~\cite{Wang2019}, whose bidirectional version was applied to establish the upper bound for entanglement generating capacity of any bipartite channels~\cite{Bauml2018}. Our second bound reads,
\begin{equation}\label{eq:max_rains_info_lb_regularized_gamma_ppt}
    \gamma^{\infty}_{\PPT}(\cN_{AB\rightarrow A'B'}) \geq 2^{R_{\max}^{2\rightarrow 2}(\cN_{AB\rightarrow A'B'})}.
\end{equation}
The max-Rains information of a bipartite quantum channel $\cN_{AB\rightarrow A'B'}$ is defined as, 
\begin{equation}
    R_{\max}^{2\rightarrow 2}(\cN) \coloneqq \log(\Gamma^{2\rightarrow 2}(\cN)),
\end{equation}
where $\Gamma^{2\rightarrow 2}(\cN)$ can be evaluated via the SDP,
\begin{equation}
\begin{aligned}
    \Gamma^{2\rightarrow 2}(\cN) = \inf\big\{\| &\tr_{AB}\{V_{AA'BB'} + Y_{AA'BB'}\}\|_{\infty} : \\
    &(V - Y)^{T_{BB'}} \geq J^{\cN}_{AA'BB'}\big\},
\end{aligned}
\end{equation}
where $V,Y\geq 0$. Since our bounds work for general bipartite channels, we can then apply our bounds to investigate the variations of sampling cost for noisy gates. For example, consider a noisy CNOT gate $\cN_p = \cD_{p}\circ \cC\cX$ where $\cD_p$ is a two-qubit depolarizing channel. For any two-qubit state $\rho_{AB}$, we have,
\begin{equation}
    \cN_p(\rho_{AB})\coloneqq (1-p)\Op{CNOT}\rho_{AB}\Op{CNOT} + p \frac{I_{AB}}{d_{AB}}.
\end{equation}
As we can observe from Fig.~\ref{fig:maxRains_vs_maxLog}, when there is no noise occurring, $\gamma_{\PPT}(\cN_p)$ can reach $3$ from our SDP calculation. This is the same value as from the LOCC results of the CNOT gate derived in Ref.~\cite{Piveteau2022circuit}. The observation indicates no advantages of using bipartite PPT channels in decomposing noisy CNOT. The two lower bounds from $LN_{\max}$ and $R_{\max}^{2\rightarrow 2}$ both start at $2$, declaring at least one ebit is required to implement a CNOT with PPT or LOCC. When $p\geq 0.9$, too much noise destroys the entangling power of the CNOT gate, and all bounds reduce to one. 

An interesting phenomenon from Fig.~\ref{fig:maxRains_vs_maxLog} can be observed, that is, the two lower bounds coincide with each other. This does not hold for general cases. We provide the following counterexample to show the relative size of the two bounds. Suppose now a SWAP gate is applied where the second qubit is accidentally stroked by the qubit amplitude damping (AD) channel $\cA_{\gamma}$ defined by its Kraus operators 
\begin{equation}
    K_0 = \ketbra{0}{0} + \sqrt{1-\gamma}\ketbra{1}{1},\, K_1 = \sqrt{\gamma}\ketbra{0}{1}.
\end{equation}
AD channel models the loss of quantum information when a quantum system interacts with its thermal bath. Our noisy SWAP operation is then defined as
\begin{equation}
    \cS_{\gamma}(\rho_{AB})\coloneqq (\cI_A \ox \cA_{\gamma})\left(\Op{SWAP} \rho_{AB}\Op{SWAP}\right)
\end{equation}
for any two-qubit state $\rho_{AB}\in \cD(\cH_{AB})$. The results have been shown in Fig.~\ref{fig:boundsADSWAP} where a distinct gap between the one-shot $\gamma_{\PPT}(\cS_{\gamma})$ and the two lower bounds can be identified. Notice that $LN_{\max}(\cS_{\gamma})$ does not necessarily equal to $R_{\max}^{2\rightarrow2}(\cS_{\gamma})$, where in this case, the max logarithmic negativity bound is slightly larger than the max-Rains information bound as the damping rate varies.

\begin{figure}[t]
    \centering
    \includegraphics[width=0.9\linewidth]{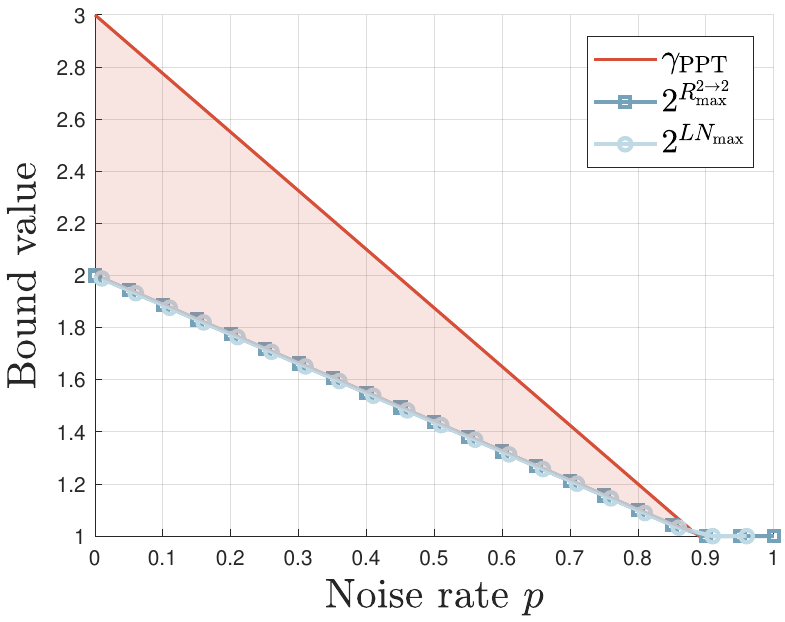}
    \caption{\textbf{Lower bounds comparison for noisy CNOT operation undertaking the two-qubit depolarizing channel by the noise rate $p$.} The orange solid line gives the one-shot exact $\gamma_{\PPT}(\cN)$. The dark and light blue dashed lines with square and circle markers illustrate the lower bounds of regularized PPT $\gamma$-factor by max-Rains information and the max logarithmic negativity, respectively. 
    {The shaded area is the possible region for $\gamma_{\PPT}^{\infty}(\cN)$ to lie in.}
    }
    \label{fig:maxRains_vs_maxLog}
\end{figure}

\begin{figure}[t]
    \centering
    \includegraphics[width=0.9\linewidth]{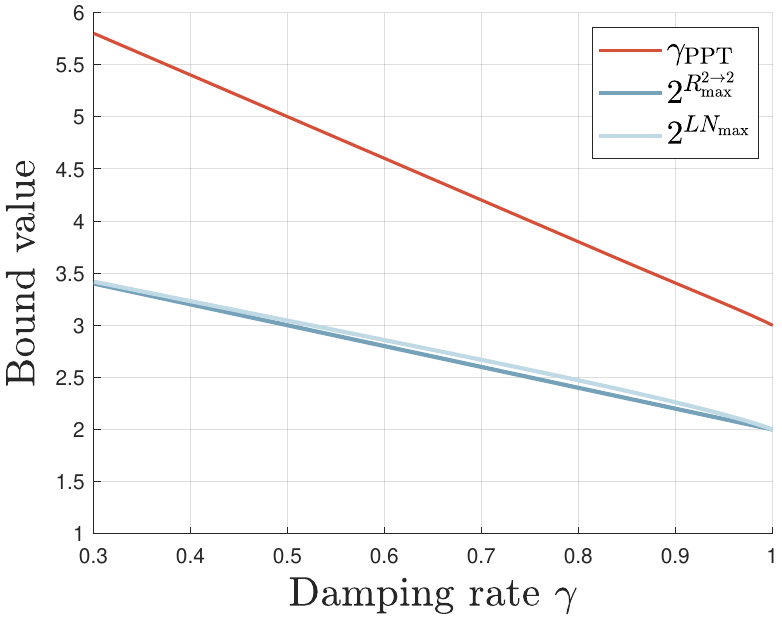}
    \caption{\textbf{Lower bounds comparison for noisy SWAP operation undertaking the single-qubit amplitude damping channel by the damping rate $\gamma$.} The orange solid line gives the one-shot exact $\gamma_{\PPT}(\cN)$. The dark and light blue lines illustrate the lower bounds by max-Rains information and the max logarithmic negativity, respectively.}
    \label{fig:boundsADSWAP}
\end{figure}

\subsection{SEP-assisted sampling cost and SEP exact entanglement cost}\label{subsec:sep_exact_entanglement_cost_lb_for_sampling_cost}
In the previous section, we have extended the set of LOCC operations to PPT operations for more theoretical convenience. Here, we concentrate on a smaller free operation set $\cF = \SEP$, namely the \textit{separable-preserving} channels defined in~\cite{Gour2021entanglement}. A bipartite channel is separable if and only if its Choi matrix is separable. Therefore, every separable quantum channel, at the same time, forms a PPT channel which implies that $\gamma_{\PPT}(\cN)\leq \gamma_{\SEP}(\cN)$. Besides, each above entanglement cost measure regarding SEP channels forms an upper bound of the corresponding PPT-entanglement cost. The set of SEP channels also serves as a candidate for investigating the properties of LOCC channels in quantum information theory, despite,  detecting the separability of any quantum states is NP-hard~\cite{Doherty2004complete}.

{Inspired by the entanglement manipulation for quantum channels~\cite{Kim2021one,Chitambar2017entanglement},} we can establish a tighter bound towards LOCC-assisted circuit knitting based on {~\cite[Theorem III.1.]{Kim2021one}. 
\begin{theorem}[\cite{Kim2021one}]
Given $\varepsilon \geq 0$, the one-shot dynamic entanglement cost of a bipartite quantum channel $\mathcal{N}_{A B}$ under seperability-preserving superchannel is bounded as
\begin{equation}
    L R_s^{\varepsilon}\left(\cN_{A B}\right) \leq E_{C,\varepsilon}^{(1)}\left(\cN_{A B}\right) \leq L R_s^{\varepsilon}\left(\cN_{A B}\right)+2.
\end{equation}
\end{theorem}
We further observe the relation between} the regularization forms of both quantities and establish the following corollary.
\begin{corollary}
    For a bipartite quantum channel $\cN_{AB\rightarrow A'B'}$, the SEP-assisted regularized $\gamma$-factor is exponentially lower bounded by the exact (parallel) SEP-entanglement cost of $\cN$,
    \begin{equation}
        \gamma^{\infty}_{\SEP}(\cN_{AB\rightarrow A'B'}) \geq 2^{E^{\SEP}_{C,0}(\cN_{AB\rightarrow A'B'})}.
    \end{equation}
\end{corollary}    

The detailed proof can be found in Appendix~\ref{appendix:proof_of_regularized_sep_sampling_cost}. Through the computable bound from PPT operations, one may allow `too much' freedom to make a coarse relaxation for LOCC entanglement. Our ultimate goal is to establish the LOCC sampling cost lower bound by the exponential LOCC entanglement cost of given bipartite channels. All separable operations form a much more restricted set {compared to PPT operations, and} contains all LOCC channels. Despite their unrealizability in some cases, to our best knowledge, the usage of SEP channels leads to much tighter bounds for both the sampling cost and the entanglement costs compared to the PPT results, i.e., $\gamma^{\infty}_{\SEP}(\cN) \leq \gamma^{\infty}_{\LOCC}(\cN)$ and $E^{\SEP}_C(\cN) \leq E_{C}(\cN)$. As a consequence, combining all of the above results, one can derive the following remark connecting both the LOCC circuit knitting sampling complexities and the entanglement cost measures of bipartite quantum channels,
\begin{remark}
    For any bipartite quantum channel $\cN_{AB\rightarrow A'B'}$, the following inequalities holds,
    \begin{equation}
    \begin{aligned}
        &\gamma^{\infty}_{\PPT}(\cN) \geq  2^{E_{C,0}^{\PPT}(\cN)} \geq 2^{E_{C}^{\PPT}(\cN)},\\ &\gamma^{\infty}_{\SEP}(\cN) \geq 2^{E_{C,0}^{\SEP}(\cN)} \geq 2^{E_{C}^{\SEP}(\cN)}.
    \end{aligned}
    \end{equation}
\end{remark}

We can also demonstrate that the LOCC regularized $\gamma$-factor is exponentially lower bounded by the entanglement cost of $\cN$ if $\cN$ forms a bicovariant channel~\cite{Bauml2019resource}, {for example, CNOT.} The proof of this can be found in Appendix~\ref{appendix:lb_for_smoothed_regularized_locc_via_choi_state}. However, whether or not the above relation holds in general stays open, and we leave this for future investigation.

\section{Fundamental limitation of circuit knitting}\label{sec:fundamental_lim_circuit_knitting}
In previous sections, we demonstrated the quantitative relation between the sampling cost of simulating a general bipartite channel via circuit knitting and the corresponding zero-error entanglement cost of the channel. Our theorem proves the intuition that the more entanglement resource cost for physically implementing the channel, the harder it is to simulate it via sampling LOCC channels combined with post-measurement selections. Based on the results, we attempt to establish the fundamental limitation of the circuit knitting technique in this section: in  subsection~\ref{subsec:sampling_cost_for_specific_operations}, we revisit the sampling cost for some important quantum logic gates via new bounds; in subsection~\ref{subsec:exponential_cost_for_knitting_multiple_instances}, we {investigate the exponential growth of sampling cost for simulating multiple instances of bipartite channels in a more practical scenario.}

\subsection{Sampling cost for specific operations}\label{subsec:sampling_cost_for_specific_operations}

The investigation of both the CNOT and SWAP gates stands at the central position of both quantum circuit architecture~\cite{Kielpinski2002architecture} and quantum information theory. A CNOT gate and a Bell state are commonly treated as equivalent resources of entanglement since a CNOT gate can be constructed via gate teleportation consuming one Bell state~\cite{Bennett1993}. A SWAP gate owns the maximal resource of dynamical entanglement~\cite{Gour2020}, which can generate two ebits shared with bi-parties. Besides, CNOT itself, together with single-qubit Pauli gates, forms a \textit{universal} set for realizing any logic gates~\cite{Nielsen2010quantum}, while the SWAP gate is closely related to the general two-qubit Clifford gates~\cite{Piveteau2022circuit}.

As a start, we apply our theories to some specific gates, including CNOT and SWAP, and derive the same results as from previous research~\cite{Piveteau2022circuit}. Since $\LOCC \subsetneq \SEP \subsetneq \PPT$, for single-shot exact $\gamma$-factor, we already have,
\begin{equation}\label{eq:gamma_factor_locc_sep_ppt}
    \gamma_{\LOCC}(\cN) \geq \gamma_{\SEP}(\cN) \geq \gamma_{\PPT}(\cN) 
\end{equation}
for any general bipartite channel $\cN$. Despite that, our result indicates that for a class of bipartite channels, the assistance of PPT-entanglement might contribute an insignificant advantage in reducing the sample complexity from the circuit knitting method. 

We particularly \textbf{remark} that for any two-qubit Clifford gates, there is no advantage in using the PPT-assisted QPD, and the $\gamma$-factors of the following bipartite gates are given by,
\begin{equation}
\begin{aligned}
    \gamma_{\cF}(\rm CNOT) &= 3;\\
    \gamma_{\cF}(i\rm{SWAP}) =
    \gamma_{\cF}(\rm SWAP) &= 7,
\end{aligned}
\end{equation}
where $\cF\in\{\LOCC, \SEP, \PPT\}$. Regarding the origin of circuit knitting, understanding the properties of two-qubit Clifford gates holds significant importance in terms of quantum advantages. Notably, it has been established that every such gate can be constructed via the gates $\mathbb{I}$, $\rm{CNOT}$, $\rm{SWAP}$, and  $i\rm{SWAP}$ up to local unitary gates. These lead to the unchanged PPT-assisted sampling overhead compared to the LOCC results from~\cite{Piveteau2022circuit}, and therefore, disproves the improvement from PPT-entanglement in circuit knitting on two-qubit Clifford gates.

To investigate the limitation of PPT-assisted knitting, we also involve the discussion on the parallel-cut scheme in this case, which has achieved benefits for cutting copies of a Clifford gate $U$ under LOCC~\cite{Piveteau2022circuit}. Consider cutting $n$-copies of CNOT gates, one can derive,
\begin{equation}
    \gamma_{\cF}(\Op{CNOT}^{\ox n}) = 2^{n+1} - 1
\end{equation}
for the same set of operations $\cF$ stated before. With the additional quantum memory for storing entanglement, one can demonstrate that the exact effective sampling cost of PPT-assisted knitting for the CNOT gate reaches the same asymptotic value, i.e., $\gamma^{\infty}_{\cF}(\Op{CNOT}) = \lim_{n\rightarrow \infty} \gamma^{(n)}_{\cF}(\Op{CNOT}) = 2$. The proof of the above remarks can refer to Appendix~\ref{appendix:proof_of_prop_gamma_ppt_lb_channel}. The corresponding CNOT evolution is called the bicovariant bidirectional channel~\cite{Bauml2018}. The max-Rains information bound in Eq.~\eqref{eq:max_rains_info_lb_regularized_gamma_ppt} reduces to the standard Rains relative entropy~\cite{Rains1999,Vedral1998} of its Choi state, indicating no benefits from the assistance of PPT operations under parallel cutting of CNOT.

\begin{figure*}[t]
    \centering
    \includegraphics[width=0.9\linewidth]{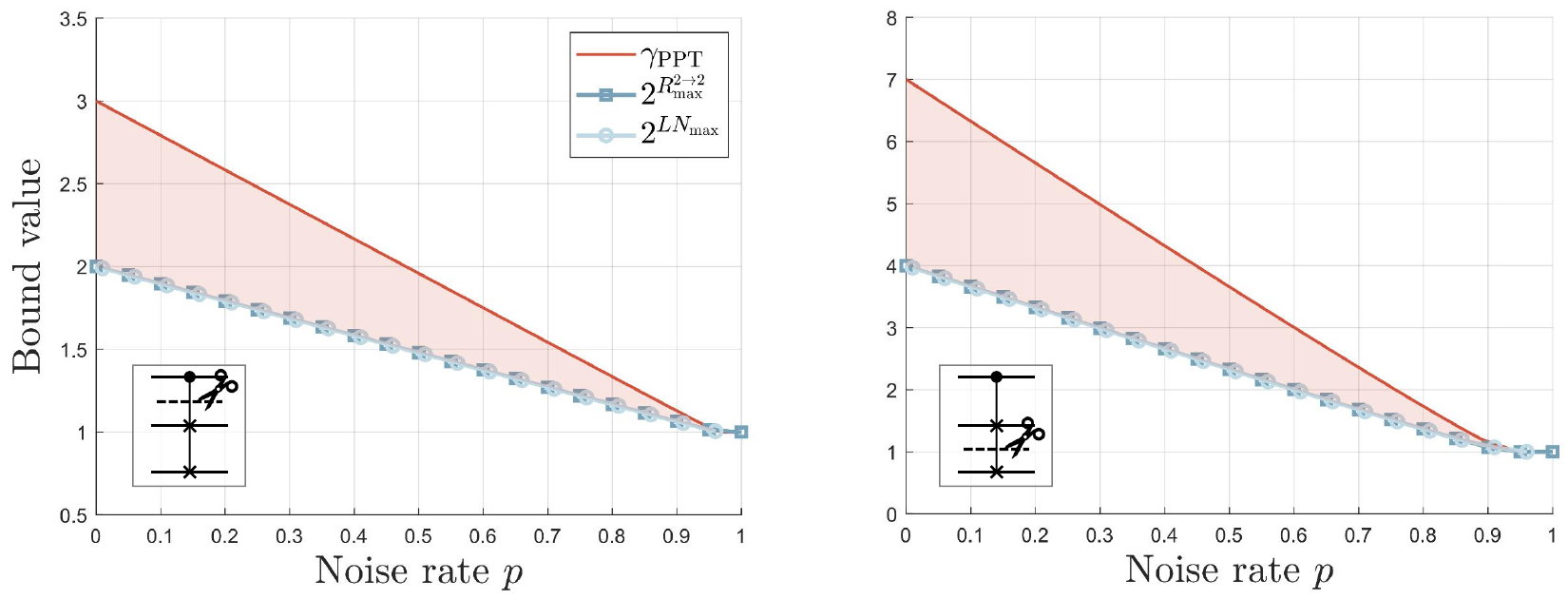}
    \caption{\textbf{Lower bounds comparison for noisy control-SWAP operation (CSWAP) undertaking the global depolarizing channel by the noise rate $p$ {with respect to different partitioning.}} {The Left and right figures illustrate the bound values of cutting qubit $1$ and $2$ and qubit $2$ and $3$, respectively.} The orange line gives the one-shot exact $\gamma_{\PPT}(\cN)$. The dark and light blue lines with square and circle markers illustrate the lower bounds of regularized PPT $\gamma$-factor by max-Rains information and the max logarithmic negativity, respectively. The shaded area indicates that our bounds serve as feasible efficient-computable lower bounds for $\gamma_{\PPT}^{\infty}(\cN)$.}
    \label{fig:cswap_cutingAB}
\end{figure*}

Furthermore, our bound works for general bipartite quantum channels with non-equal dimensionalities of $A$ and $B$, instead of only unitary channels studied with~\cite{Schmitt2023cutting} the \textit{KAK}-decomposition~\cite{Mitarai2021overhead}. It then serves as a useful tool for studying the nonlocality of non-KAK-like bipartite interactions and characterizing the optimal sampling overhead for non-unitary channels. For instance, the Toffoli gate is another important three-qubit gate discussed frequently in the quantum circuit architecture. Our observation showcases that, 
\begin{equation}
    \gamma^{(1)}_{\cF}(\rm Toffoli) = \gamma^{(2)}_{\cF}(\rm Toffoli) = 3,
\end{equation}
by setting a cut between the qubit 1-and-2 (1) or qubit 2-and-3 (2), as shown in~\cite{Schmitt2023cutting}, which acts as a tighter lower bound from the Choi state of the Toffoli gate. Interestingly, we have also tested the control-SWAP (CSWAP) gate, which is widely applied to quantum information tasks. In Fig.~\ref{fig:cswap_cutingAB}, compared to the results from Toffoli gate, the two cutting cases give very different values for the PPT-assisted $\gamma$-factor and the two lower bounds. Cutting between 1 and 2 qubits gives only half of the $\gamma$-factor from cutting 2 and 3 qubits. We observe that for general unitary cutting, the sampling cost may significantly vary by choosing different partitions.

\subsection{Exponential cost for knitting multiple instances of bipartite channels}\label{subsec:exponential_cost_for_knitting_multiple_instances}
For a general bipartite quantum channel $\cN_{AB\rightarrow A'B'}$, by the definition of parallel cutting, one can derive a lower bound for the $n$-effective sampling cost with finite $n$.
\begin{proposition}
    For any bipartite channel $\cN_{AB\rightarrow A'B'}$, the $n$-effective LOCC-assisted $\gamma$-factor can be lower bounded by
    \begin{equation}\label{eq:effective_gamma_lb}
        \gamma^{(n)}_{\LOCC}(\cN) \geq (2^{n E_{C,0}^{\PPT}(\cN)} - 1)^{1/n}.
    \end{equation}
\end{proposition}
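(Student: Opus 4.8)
The statement is a finite-$n$ refinement of Theorem~1, and I would derive it from the one-shot ingredient behind Theorem~1 rather than from the regularized bound. First, applying Eq.~\eqref{eq:gamma_factor_locc_sep_ppt} to the composite channel $\cN^{\ox n}$ (viewed as a single bipartite channel $A^nB^n\rightarrow A'^nB'^n$) and taking $n$-th roots gives $\gamma^{(n)}_{\LOCC}(\cN)\ge\gamma^{(n)}_{\PPT}(\cN)=\gamma_{\PPT}(\cN^{\ox n})^{1/n}$, so it suffices to lower-bound $\gamma_{\PPT}(\cN^{\ox n})$. For this I invoke the one-shot estimate $\gamma_{\PPT}(\cM)\ge 2^{E^{(1)}_{\PPT,C,0}(\cM)}-1$, valid for every bipartite channel $\cM$ and established en route to Theorem~1 in Appendix~\ref{appendix:proof_of_regularized_ppt_sampling_cost} (equivalently one may route through $LN_{\max}$, which one-shot dominates $E^{(1)}_{\PPT,C,0}$); taking $\cM=\cN^{\ox n}$ gives $\gamma_{\PPT}(\cN^{\ox n})\ge 2^{E^{(1)}_{\PPT,C,0}(\cN^{\ox n})}-1$. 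Finally I pass from the one-shot cost of $\cN^{\ox n}$ to $n$ times the regularized cost: the sequence $m\mapsto E^{(1)}_{\PPT,C,0}(\cN^{\ox m})$ is subadditive, since tensoring optimal one-shot PPT simulation protocols for $\cN^{\ox a}$ and $\cN^{\ox b}$ yields a PPT simulation protocol for $\cN^{\ox(a+b)}$ whose entanglement dimension is the product, so by Fekete's lemma $E^{\PPT}_{C,0}(\cN)=\inf_m\tfrac1m E^{(1)}_{\PPT,C,0}(\cN^{\ox m})\le\tfrac1n E^{(1)}_{\PPT,C,0}(\cN^{\ox n})$. Since $t\mapsto 2^t-1$ is increasing this yields $\gamma_{\PPT}(\cN^{\ox n})\ge 2^{n E^{\PPT}_{C,0}(\cN)}-1$, and taking $n$-th roots proves the first inequality. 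As a consistency check, for $\cN=\mathrm{CNOT}$ one has $E^{\PPT}_{C,0}=1$, so the bound reads $\gamma^{(n)}_{\LOCC}(\mathrm{CNOT})\ge(2^n-1)^{1/n}$, consistent with the exact value $\gamma^{(n)}_{\cF}(\mathrm{CNOT})=(2^{n+1}-1)^{1/n}$; note also $(2^{nt}-1)^{1/n}<2^t$, so this first inequality is in any case implied by $\gamma^{(n)}_{\PPT}(\cN)\ge\gamma^{\infty}_{\PPT}(\cN)\ge 2^{E^{\PPT}_{C,0}(\cN)}$, the former holding because $\log\gamma_{\PPT_m}(\cN^{\ox m})$ is subadditive in $m$.

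\textbf{Second inequality.} Since $t\mapsto(2^{nt}-1)^{1/n}$ is nondecreasing, it is enough to show $E^{\PPT}_{C,0}(\cN)\ge Q^{2\rightarrow2}_{\LOCC}(\cN)$, which is the standard ``cost dominates capacity'' relation. Indeed $Q^{2\rightarrow2}_{\LOCC}(\cN)\le Q^{2\rightarrow2}_{\PPT}(\cN)$ because PPT assistance is at least as powerful as LOCC assistance, and $Q^{2\rightarrow2}_{\PPT}(\cN)\le E^{\PPT}_{C,0}(\cN)$ because an exact PPT simulation of $\cN^{\ox n}$ consuming $n(E^{\PPT}_{C,0}(\cN)+o(1))$ ebits can be concatenated with any PPT-assisted protocol that uses $\cN^{\ox n}$ to generate $n(Q^{2\rightarrow2}_{\PPT}(\cN)-o(1))$ ebits, producing a PPT operation turning the former into the latter; as PPT operations cannot increase the $\kappa$-entanglement, the rates must satisfy $E^{\PPT}_{C,0}(\cN)\ge Q^{2\rightarrow2}_{\PPT}(\cN)$ in the limit $n\rightarrow\infty$. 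Substituting into the first inequality completes the proof.

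\textbf{Main obstacle.} Given Theorem~1 and its underlying one-shot estimate, the proposition is essentially a bookkeeping corollary, and the only load-bearing step is the tensor-power accounting: one must check that the one-shot lemma applies verbatim to the composite channel $\cN^{\ox n}$ over the $n$-fold cut (so that $\gamma_{\PPT_n}(\cN^{\ox n})$ is literally $\gamma_{\PPT}$ of a single bipartite channel) and that $E^{(1)}_{\PPT,C,0}$ is subadditive under tensoring, so that its normalized limit sits below every finite-$n$ value. The only genuinely external input is the cost-$\ge$-capacity inequality $E^{\PPT}_{C,0}(\cN)\ge Q^{2\rightarrow2}_{\LOCC}(\cN)$, which should be cited or reproved via the composition argument above.
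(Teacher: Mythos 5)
Your proposal is correct and follows essentially the route the paper intends: relax LOCC to PPT, lower-bound $\gamma_{\PPT}(\cN^{\ox n})$ by an $n$-fold entanglement measure, and finish with a cost-dominates-capacity argument for the second inequality. One caveat: your primary route rests on the one-shot inequality $\gamma_{\PPT}(\cM)\ge 2^{E^{(1)}_{\PPT,C,0}(\cM)}-1$, justified by the parenthetical claim that $LN_{\max}$ one-shot dominates $E^{(1)}_{\PPT,C,0}$; that direction is backwards — the one-shot exact PPT cost is $\log\lceil 2^{LN_{\max}}\rceil\ge LN_{\max}$ — so Lemma~\ref{app_lem:gammaPPT_1shot} does not directly deliver that one-shot bound (for SEP the paper proves the one-shot version by a separate operational construction in Lemma~\ref{app_lem:gammaSEP_1shot}, and an analogous argument would be needed for PPT). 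Your own fallbacks close this gap: either Eq.~\eqref{eq:LN_max_nshot}, i.e.\ $\gamma_{\PPT}(\cN^{\ox n})\ge 2^{n LN_{\max}(\cN)}-1$ from super-additivity of $LN_{\max}$ combined with $LN_{\max}(\cN)\ge E^{\PPT}_{C,0}(\cN)$, or the observation that submultiplicativity of $\gamma_{\PPT}$ under tensor products gives $\gamma^{(n)}_{\PPT}(\cN)\ge\gamma^{\infty}_{\PPT}(\cN)\ge 2^{E^{\PPT}_{C,0}(\cN)}>(2^{nE^{\PPT}_{C,0}(\cN)}-1)^{1/n}$ directly from Theorem~1. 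Your composition argument for $E^{\PPT}_{C,0}(\cN)\ge Q^{2\rightarrow2}_{\LOCC}(\cN)$ is a valid reproof of the relation the paper simply cites, modulo the standard continuity bookkeeping needed to pass from $\epsilon$-error distillation to the asymptotic rate.
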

Based on the proposition and the irreversible nature of the exact PPT-entanglement cost of bipartite quantum channels, we also have $\gamma^{(n)}_{\LOCC}(\cN)\geq (2^{n\cdot Q^{2\rightarrow 2}_{\LOCC}(\cN)} - 1)^{1/n}$ where $Q^{2\rightarrow 2}_{\LOCC}(\cN)$ is called the quantum communication capacity of $\cN$.

In the task of simulating $n\geq 1$ parallel instances of $\cN$, we obtain that the optimal number of samples of circuit knitting with respect to the LOCC channels scales at least $\Omega(4^{n E^{\PPT}_{C,0}(\cN)})$ delivering that the exponential growth in the sampling cost originates from the entanglement cost of its physical implementation. The number $n$ relates to the depth of the circuit, or the evolution time of the system, indicating the circuit knitting technique is generally inefficient for deep architectures. Besides, suppose the depth of the circuit is restricted. Only if $E^{\PPT}_{C,0}(\cN)$ scales $\cO(\log(N))$ where $N$ is the number of qubits in the system, the sampling cost can be polynomially lower bounded in $N$. In the large dimensional limit, only when $\cN$ requires negligible entanglement resource to be implemented the lower bound of sampling cost from circuit knitting can gain reasonable scaling for practical usage.

We then extend our discussion to the quantum circuit containing $n \geq 1$ distinct bipartite quantum channels or the noisy quantum gates, as a reasonable assumption on NISQ devices, denoted as $\cN_1, \cdots, \cN_n$. In the first scenario, each of these $\cN_j$'s is aligned to parallel via local swaps as local operations can preserve the $\gamma$-factor of each $\cN_j$. Given the aligned circuit as $\cN = \bigotimes_{j=1}^n \cN_j$, we directly derive the lower bound of $\gamma_{\cF}(\cN)$,
\begin{equation}\label{eq:parallel_cut_Nj}
    \gamma_{\cF}\left(\cN\right) \geq \left(\prod_{j=1}^n 2^{Q^{2\rightarrow 2}_{\LOCC}(\cN_j)}\right) - 1,
\end{equation}
where we observe that if every $\cN_j$ has insignificant entangling power, the product may tend to a constant scaling, and we may achieve an acceptable lower bound on $\gamma_{\cF}(\cN)$ for the practical implementation.

However, cutting the total circuit directly via the above is typically intractable as it requires tomography of $\cN$. Otherwise, it may require smart channel grouping strategies to keep each term in the product of Eq.~\eqref{eq:parallel_cut_Nj} sufficiently small. Instead, for each of these $n$ channels, suppose the corresponding $\cF$-assisted QPD has been pre-determined and denote the associated optimal sampling overheads by $\gamma_{\cF}(\cN_{j})$ for the $j$-th channel. Then, during each shot of execution on the circuit, we can only adopt the single cut strategy that $\cN_j$ gets independently and randomly replaced by one of its QPD channels, which leads to a total number of samples required by at least $\gamma^2_{\rm tot} = \prod_{j=1}^n \gamma_{\cF}(\cN_j)^2$. We then derive the following result.

\begin{corollary}\label{coro:gamma_tot}
    For a quantum circuit containing $n$ bipartite quantum channels $\cN_1, \cN_2, \cdots, \cN_n$ across two parties being cut,
    \begin{equation}
        \gamma_{\rm tot}^2 \geq \prod_{j=1}^n \left(2^{ E^{(1)}_{\cF,C,0}(\cN_j)} - 1\right)^2,
    \end{equation}
\end{corollary}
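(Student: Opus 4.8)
The plan is to reduce Corollary~\ref{coro:gamma_tot} to a one-shot, single-channel estimate and then multiply. Under the single-cut strategy each $\cN_j$ is replaced, independently and at random, by one of the channels of its optimal $\cF$-assisted QPD, so the overall cost is $\gamma_{\rm tot}^2=\prod_{j=1}^n\gamma_{\cF}(\cN_j)^2$. Since every factor on the right is nonnegative, it therefore suffices to establish, for each $j$ and each $\cF\in\{\LOCC,\SEP,\PPT\}$, the one-shot inequality
\begin{equation}\label{eq:oneshot-corkey}
    \gamma_{\cF}(\cN)\;\ge\;2^{E^{(1)}_{\cF,C,0}(\cN)}-1 ,
\end{equation}
and then square and take the product over $j$. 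Inequality~\eqref{eq:oneshot-corkey} is the un-regularized counterpart of Theorems~1 and~2: applying it to $\cN^{\otimes n}$ and letting $n\to\infty$, with $\tfrac1n E^{(1)}_{\cF_n,C,0}(\cN^{\otimes n})\to E^{\cF}_{C,0}(\cN)$, recovers those asymptotic bounds, the additive $-1$ disappearing in the limit. Hence essentially all the work is in proving~\eqref{eq:oneshot-corkey}.

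For~\eqref{eq:oneshot-corkey} I would start from an optimal $\cF$-assisted QPD $\cN=\sum_j\alpha_j\cM_j$ with $\cM_j\in\cF$ and $\sum_j\abs{\alpha_j}=\kappa:=\gamma_{\cF}(\cN)$, collect positive and negative terms into $\cN=\tfrac{\kappa+1}{2}\,\cN_{+}-\tfrac{\kappa-1}{2}\,\cN_{-}$ with $\cN_{\pm}\in\cF$ CPTP (using convexity of $\cF$), equivalently $\tfrac{1}{1+s}(\cN+s\,\cN_{-})\in\cF$ with $s=\tfrac{\kappa-1}{2}$ --- so the robustness of $\cN$ relative to $\cF$ is at most $(\kappa-1)/2$. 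The key step is then to show this decomposition implies $\cN$ is \emph{exactly} $\cF$-simulable from a maximally entangled state $\Phi(k)$ of Schmidt rank $k\le\lceil\gamma_{\cF}(\cN)\rceil$, i.e.\ $E^{(1)}_{\cF,C,0}(\cN)\le\log\lceil\gamma_{\cF}(\cN)\rceil<\log(\gamma_{\cF}(\cN)+1)$, which rearranges to~\eqref{eq:oneshot-corkey}. For $\cF=\PPT$ this is cleanest: one-shot exact $\PPT$-simulability of a bipartite channel is governed by a semidefinite condition on $(J^{\cN})^{T_{BB'}}$ --- the same machinery underlying $LN_{\max}$ and $R_{\max}^{2\rightarrow 2}$ in Theorems~1 and~2 --- and the positive/negative split certifies it at resource dimension $\lceil\kappa\rceil$. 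For $\cF=\SEP$ one uses the corresponding separable-decomposition criterion, and for $\cF=\LOCC$ one must instead exhibit an explicit local protocol realizing the signed combination $\tfrac{\kappa+1}{2}\cN_{+}-\tfrac{\kappa-1}{2}\cN_{-}$ coherently with the shared entangled state rather than by postselection.

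The main obstacle is exactly this conversion: turning the QPD picture --- real coefficients, total weight $\kappa$, zero error only ``in expectation'' for observables --- into a genuine physical protocol, a single free channel $\Lambda\in\cF$ acting on the input together with a maximally entangled state whose Schmidt rank is controlled by $\kappa$, reproducing $\cN$ deterministically and exactly. Three points need care: (i) certifying that the constructed $\Lambda$ truly lies in $\cF$ --- for $\SEP$/$\PPT$ this can be read from the Choi operator, but for $\LOCC$ one must give an explicit local implementation; (ii) the rounding of the resource dimension to an integer, which is exactly what produces the additive $-1$ and which washes out only after regularization; and (iii) ensuring exactness in diamond norm rather than $\varepsilon$-closeness, so that the $\varepsilon=0$ quantity $E^{(1)}_{\cF,C,0}$ is the right object. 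I expect (i) together with (ii) --- staying inside $\cF$ while keeping the maximally entangled state of Schmidt rank at most $\lceil\gamma_{\cF}(\cN)\rceil$ --- to be the genuinely delicate part, the rest being routine bookkeeping.
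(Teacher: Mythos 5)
Your reduction is exactly the paper's: $\gamma_{\rm tot}^2=\prod_j\gamma_{\LOCC}(\cN_j)^2\ge\prod_j\gamma_{\cF}(\cN_j)^2$ for $\cF\in\{\SEP,\PPT\}$, followed by the one-shot inequality $\gamma_{\cF}(\cN)\ge 2^{E^{(1)}_{\cF,C,0}(\cN)}-1$ applied factor by factor. You also correctly locate all the substance in that one-shot inequality. The gap is that you do not actually prove it: you reduce it to the claim that a QPD of weight $\kappa$ yields an \emph{exact} $\cF$-simulation of $\cN$ from $\Phi_k$ with $k=O(\kappa)$, flag this conversion as ``the genuinely delicate part,'' and stop there. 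The missing idea is the specific construction the paper uses (Lemma~S2, following Chitambar et al.): writing the optimal decomposition as $\cN=(c+1)\cN_1-c\,\cN_2$ with $\cN_{1,2}$ separability-preserving, one defines
\begin{equation*}
\Lambda(\sigma_{A\Bar AB\Bar B})=\tr\bigl[\Phi_k\,\sigma_{\Bar A\Bar B}\bigr]\,\cN(\sigma_{AB})+\tr\bigl[(I-\Phi_k)\,\sigma_{\Bar A\Bar B}\bigr]\,\cN_2(\sigma_{AB}),\qquad k=2\lceil c\rceil .
\end{equation*}
The point you do not reach is why $\Lambda$ is itself a free (separability-preserving) channel even though it ``contains'' the resourceful $\cN$: for any separable input the overlap with $\Phi_k$ obeys $p=\tr[\Phi_k\sigma_{\Bar A\Bar B}]\le 1/k$, hence $(1-p)/p\ge k-1\ge c$ and $p\,\cN+(1-p)\,\cN_2=\bigl(p(c+1)\bigr)\cN_1+\bigl(1-p(c+1)\bigr)\cN_2$ is a convex combination of free channels. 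Feeding $\rho\ox\Phi_k$ into $\Lambda$ then reproduces $\cN(\rho)$ exactly, giving $E^{(1)}_{\SEP,C,0}(\cN)\le\log k$ and the desired bound after the integer bookkeeping you anticipate. Without this (or an equivalent) mechanism your step from ``robustness at most $(\kappa-1)/2$'' to ``exact simulation at Schmidt rank $\lceil\kappa\rceil$'' is an assertion, not a proof.

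Two smaller points. First, the corollary restricts to $\cF\in\{\SEP,\PPT\}$, so your concern about exhibiting an explicit LOCC protocol is moot; the LOCC case enters only through the trivial relaxation $\gamma_{\LOCC}\ge\gamma_{\cF}$. Second, for $\cF=\PPT$ the paper does not run the simulation construction at all: it proves $\gamma_{\PPT}(\cN)\ge 2^{LN_{\max}(\cN)}-1$ by massaging the SDP feasible points into a feasible point for $LN_{\max}$, and then invokes $LN_{\max}\ge E^{\PPT}_{C,0}$. Your instinct that the PPT case is ``cleanest'' via the semidefinite conditions is right, but the route is through the max-logarithmic negativity as an intermediary, not through an explicit $\Phi_k$-consuming protocol.
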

where $\cF \in \{\SEP, \PPT\}$.

Suppose not all channels have zero entanglement cost, and we denote the maximum one-shot exact $\cF$-entanglement cost among $\cN_j$'s as $\widehat{E}_{\cF,C}$. Then, in the large dimension limit, {the total sampling times via QPD can scale at least $\Omega( 4^{\widehat{E}_{\cF,C}})$} by the dominant entangling channel. 

If we assume the two parties of the system are isomorphic of dimension $d_A = d_B = d$, particularly for a qubit system, $d = 2^N$ where $N$ is the number of qubits in the half circuit. Notice that for those channels with a small entanglement cost, there may exist classical simulation tools, for example, tensor network method~\cite{Bridgeman2017}, which can efficiently simulate the dynamics of the system going beyond $50$ qubits, and hence lose the meaning of accessing real quantum devices. Only those with sufficient entangling power can truly showcase the potential quantum advantages and worth to be applied with circuit knitting. 

In fact, the entanglement cost of any bipartite channels can grow {linearly in $N$}~\cite{divincenzo1998quantum,Kianvash2022}. which then leads to the scaling of $\gamma_{\rm tot}^2$ as {$\Omega(4^{N})$}, which complements the results regarding the circuit depth~\cite{Piveteau2022circuit,Schmitt2023cutting}. Even with parallel cutting, we can derive that the total sampling cost based on the regularized form can scale at the same speed. Such an exponential growth in the sampling cost can not be avoided via the above strategies, which leads to the fundamental limitation of realizing the circuit knitting method on NISQ devices. From another perspective, the entangling channels are the quantum operations that can veritably showcase quantum advantages in computational tasks. Both the deterministic and the probabilistic simulation methods stated above via LOCC can be non-scalable and difficult.

\section{Concluding remarks}\label{sec:conclusion}
In this work, {we have addressed the principal challenge associated with the circuit knitting technique by rigorously demonstrating that the sampling cost is exponentially lower bounded by the entanglement measure—specifically, the exact entanglement cost, which elucidates the fact that more entanglement to implement the bipartite operation necessitates higher knitting costs.}

Specifically, our findings confirm an exponential relationship between the regularized $\gamma$-factor and the PPT- and SEP-entanglement costs of the nonlocal circuits. This relationship holds substantial implications, particularly as the dimensions increase, where the entangling capabilities of circuits across two parties become a significant factor. We have showcased through our analysis that the sampling cost scales as $\Omega(4^N)$ with respect to the number of qubits $N$ in the half circuit, a result relates to the circuit width that complements the findings from previous studies~\cite{Piveteau2022circuit} regarding the depth of the circuit. 

{Despite the potential for employing more powerful operations such as PPT and SEP channels, our research indicates that the use of the circuit knitting technique for simulating bidirectional interactions still results in an exponential growth in sampling costs, particularly when compared to scenarios assisted by LOCC. For instance, our proofs reveal that the $\gamma$-factors for LOCC-, PPT-, and SEP-QPD are identical for two-qubit Clifford gates, extending the existing literature that employs KAK decomposition for general bipartite channels~\cite{Schmitt2023cutting}.}

Furthermore, we have addressed concerns regarding the efficiency of existing methods by providing tighter lower bounds than those derived from the Schmidt decomposition, using the channel's Choi state as a reference point. Our numerical demonstrations provide clear, one-shot exact sample costs for the CSWAP gate across different partitionings, thereby illuminating the path forward in understanding the fundamental limitations imposed by the exponential sampling costs in circuit knitting relative to the entanglement cost of realizing any bipartite channel.

This research opens several avenues for future exploration. Notably, while we have begun to address the trade-offs between characterizing entire processes and the sampling costs associated with knitting these processes, more thorough investigations into the effects of composited channels are warranted. For example, the phenomenon where the composition of sequential CNOTs results in an identity gate suggests potential strategies for reducing sampling costs through smart gate grouping and circuit compiling techniques. On the other hand, the sampling cost with a certain simulation error tolerance may be more reasonable in a practical assumption of circuit knitting. The approximate version of our theory and also the connections to other resources are remained to solve. 

\section*{Acknowledgments}
The authors would like to thank Xuanqiang Zhao, Benchi Zhao, and  Zanqiu Shen for their valuable comments.
This work was partially supported by the National Key R\&D Program of China (Grant No.~2024YFE0102500), the Guangdong Provincial Quantum Science Strategic Initiative (Grant No.~GDZX2303007), the Guangdong Provincial Key Lab of Integrated Communication, Sensing and Computation for Ubiquitous Internet of Things (Grant No.~2023B1212010007),  the Start-up Fund (Grant No.~G0101000151) from HKUST (Guangzhou), the Quantum Science Center of Guangdong-Hong Kong-Macao Greater Bay Area, and the Education Bureau of Guangzhou Municipality.

\bibliography{main}

\appendix
\setcounter{subsection}{0}
\setcounter{table}{0}
\setcounter{figure}{0}

\vspace{2cm}
\onecolumngrid
\vspace{2cm}

\begin{center}
\textbf{
{\large{Appendix for `Circuit Knitting Faces Exponential Sampling Overhead\\ Scaling Bounded by Entanglement Cost'}}}
\end{center}

\numberwithin{equation}{section}
\renewcommand{\theproposition}{S\arabic{proposition}}
\renewcommand{\thedefinition}{S\arabic{definition}}
\renewcommand{\thefigure}{S\arabic{figure}}
\setcounter{equation}{0}
\setcounter{table}{0}
\setcounter{section}{0}
\setcounter{proposition}{0}
\setcounter{definition}{0}
\setcounter{figure}{0}


In these Supplementary Notes, we offer detailed proofs of the theorems and propositions in the manuscript. In Appendix~\ref{appendix:sdp_ppt_knitting_remarks}, we first provide details of semidefinite programming (SDP) for PPT-assisted circuit knitting in terms of the $\gamma$-factor. Based on the formulation, we then prove the remark for specific two-qubit gates. In Appendix~\ref{appendix:proof_of_regularized_ppt_sampling_cost} \& \ref{appendix:proof_of_regularized_sep_sampling_cost}, we prove our main theorems showing the exponential dependences between the regularized $\gamma$-factor and the exact entanglement cost of any given bipartite channels. In Appendix~\ref{appendix:proof_of_prop_gamma_ppt_lb_channel}, we also give another efficiently computable lower bound of the regularized PPT-assisted $\gamma$-factor using the bidirectional max-Rains information of the bipartite channels. In Appendix~\ref{appendix:lb_for_smoothed_regularized_locc_via_choi_state}, we introduce the concept of \textit{smoothed regularized LOCC-assisted $\gamma$-factor} of any bipartite channel and showcase that it is, in fact, lower bounded via the vanishing error entanglement cost of its Choi state in the asymptotic regime. In Appendix~\ref{appendix:circ_knit_distinct_channel}, we will give the analysis of the knitting cost for multiple distinct instances of bipartite quantum channels, including the proof of Corollary~\ref{coro:gamma_tot}.

\section{Notation and preliminaries}
Throughout the paper, we label different quantum systems by capital Latin letters, e.g., $A, B$. The respective Hilbert spaces for these quantum systems are denoted as $\cH_{A}$, $\cH_{B}$, each with dimension $d_A, d_B$. The set of all linear operators on $\cH_A$ is denoted by $\cL(\cH_A)$, with $I_A$ representing the identity operator. We denote by $\cL^{\dag}(\cH_A)$ the set of all Hermitian operators on $\cH_A$. In particular, we denote $\cD(\cH_A)\subseteq \cL^{\dag}(\cH_A)$ the set of all density operators being positive semidefinite and trace-one acting on $\cH_A$. A bipartite quantum state $\rho_{AB}\in \cD(\cH_A\ox\cH_B)$ is a linear operator acting on the product of Hilbert space $\cH_A\ox \cH_B$. The state $\rho_{AB}$ is called a positive partial transpose (PPT) state on the composite system $\cH_{AB}$ if $\cT_B(\rho_{AB}) = \rho_{AB}^{T_B}\geq 0$ where $\cT_B$ denotes the partial transpose operation regarding the system $B$. The set of separable (SEP) states on the composite system $\cH_{AB}$ (i.e. the states written at convex combinations of tensor product states) is a subset of all PPT states.

A quantum channel $\cN_{A\rightarrow A'}$ is a completely positive trace-preserving (CPTP) linear map that transforms linear operators from $\cL(\cH_{A})$ to $\cL(\cH_{A'})$. The Choi-Jamiołkowski operator of $\cN_{A\rightarrow A'}$ is expressed as $J^{\cN}_{AA'} = d_{A}(\cI_{\Bar{A}}\ox \cN)(\Phi_{\Bar{A}A}(d_{A}))$. We use  $\Tilde{J}^{\cN}_{AA'}$ to denote the corresponding Choi state of the channel. A bipartite channel $\cN_{AB\rightarrow A'B'}$ maps any linear operator in system $AB$ to system $A'B'$, i.e., $\cN_{AB\rightarrow A'B'}: \cL(\cH_{AB})\rightarrow \cL(\cH_{A'B'})$. In particular, a separable (SEP) bipartite channel is represented as~\cite{Wilde2013quantum},
\begin{equation}
    \cS_{AB\rightarrow A'B'} = \sum_j \cE_{A\rightarrow A'}^j \ox \cF_{B\rightarrow B'}^j,
\end{equation}
where $\{\cE_{A\rightarrow A'}^j\}$ and $\{\cF_{B\rightarrow B'}^j\}$ are sets of completely positive, trace-non-increasing maps (CPTN) such that $\cS_{AB\rightarrow A'B'}$ is trace preserving. Every SEP channel completely preserves the separability of quantum states and can not generate entanglement. An important example of a separable channel is the LOCC channel, which consists of local operations and classical communication. Further, a bipartite PPT channel completely preserves the PPT property of states, which satisfies $\cN^{\Gamma_B}\coloneqq \cT_{B'}\circ\cN\circ\cT_{B}$ is CPTP. This has been proven that the Choi matrix of $\cN$ must satisfy $(J^{\cN}_{ABA'B'})^{T_{BB'}}\succeq 0$ as for the state case.

The \textit{parallel cutting} setting in this work is first introduced in Ref.~\cite{Piveteau2022circuit} as shown in Fig.~\ref{fig:parallel_cut}. Imagine the situation of cutting the circuit containing $n$ instances of bipartite noisy bipartite gates $\cN_1, \cN_2, \cdots, \cN_n$. The optimal strategy of circuit knitting is to cut all the gates at the same time. To do so, one may have to first rearrange all the gates via local swap operations so that they become in parallel, i.e., $\cN = \bigotimes_{j=1}^n \cN_j$. The local swap operations can preserve the $\gamma$-factors as shown in~\cite{Piveteau2022circuit}.

\begin{figure*}[t]
    \centering
    \includegraphics[width=0.9\linewidth]{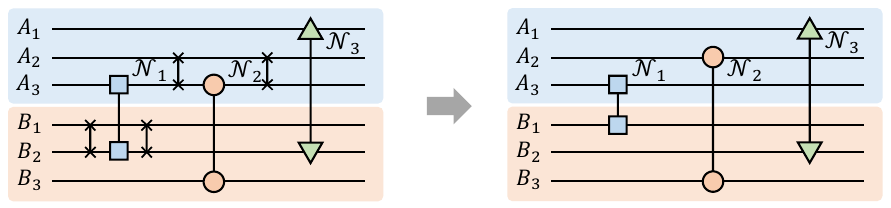}
    \caption{The strategy of aligning three bipartite noisy quantum gates to parallel via local SWAPs. $\cN_1, \cN_2$ and $\cN_3$ are originally acted on the qubit pairs $(A_3, B_2), (A_3, B_3)$ and $(A_1, B_2)$, respectively.}
    \label{fig:parallel_cut}
\end{figure*}

\section{SDP formulation and remarks from PPT-assisted circuit knitting}\label{appendix:sdp_ppt_knitting_remarks}
The definition of PPT-assisted circuit knitting resembles the original definition of LOCC-assisted circuit knitting. Given any bipartite quantum channel $\cN_{AB\rightarrow A'B'}$, one has to find a PPT-assisted quasi-probability decomposition (QPD) of it, i.e., $\cN = c_1 \cM_1 - c_2 \cM_2$ for some PPT channels $\cM_{1,2}$ and positive coefficients $c_{1,2}$. Then the $\gamma$-factor $\gamma_{\PPT}(\cN)$ can be estimated via the following semidefinite programming,
\begin{equation}\label{app_eq:SDP_PPT_channel_cost}
\begin{aligned}
    \gamma_{\PPT}(\cN) = \min_{c_{1,2}\geq 0,\ J_{\cM_{1,2}}} \;\;&  c_1 + c_2, \\
    {\rm s.t.} \;\;&  J^{\cM_{1,2}}_{ABA'B'}\succeq 0; \ (J^{\cM_{1,2}}_{ABA'B'})^{T_{BB'}}\succeq 0\\ 
    \;\;& J^{\cM_{1}}_{AB} = c_1 I_{AB};\ J^{\cM_{2}}_{AB} = c_2 I_{AB};\\
    \;\;& J^{\cN}_{ABA'B'} = J^{\cM_1}_{ABA'B'} - J^{\cM_2}_{ABA'B'}.
\end{aligned}
\end{equation}
For any arbitrary bipartite quantum channel $\cN$, i.e., the \textit{completely positive trace-preserving} (CPTP) maps, the last constraint in~\eqref{app_eq:SDP_PPT_channel_cost} guarantee $c_1 = c_2 + 1$ since for any state $\rho\in\cD(\cH_{AB})$, we have 
$$\tr[\cN(\rho)]  = c_1 \tr[\cM_1(\rho)] - c_2\tr[\cM_2(\rho)].$$
Therefore, we could rewrite the above SDP formulation for $\gamma_{\PPT}(\cN)$ by eliminating $c_2$ and also the $J^{\cM_2}_{ABA'B'}$. Writing $c = c_1$ and $J^{\cM}_{ABA'B'} = J^{\cM_1}_{ABA'B'}$, the objective function becomes $2c - 1$. The constraint $J^{\cM_2}_{ABA'B'} \succeq 0$ then becomes $J^{\cM}_{ABA'B'} \succeq J^{\cN}_{ABA'B'}$ and $\cT_{BB'}(J^{\cM_2}_{ABA'B'}) \succeq 0$ becomes $\cT_{BB'}(J^{\cM}_{ABA'B'})\succeq \cT_{BB'}(J^{\cN}_{ABA'B'})$. Notice that the second \textit{trace-preserving} constraint then becomes automatically satisfied, and we derive the equivalent SDP for $\gamma_{\PPT}(\cN)$,
\begin{equation}\label{app_eq:modified_SDP_PPT_channel_cost}
\begin{aligned}
    \gamma_{\PPT}(\cN) = \min_{c\geq 0,\ J_{\cM}} \;\;&  2c-1, \\
    {\rm s.t.} \;\;& (J^{\cM}_{ABA'B'})^{T_{BB'}}\succeq 0\\ 
    \;\;& J^{\cM}_{AB} = c I_{AB};\\
    \;\;& J^{\cM}_{ABA'B'}\succeq J^{\cN}_{ABA'B'};\\
    \;\;& (J^{\cM}_{ABA'B'})^{T_{BB'}}\succeq (J^{\cN}_{ABA'B'})^{T_{BB'}},
\end{aligned}
\end{equation}
Notice that the optimal value of $c$ from the above optimization is called the \textit{robustness of dynamical resource}~\cite{Gour2019how,regula2021fundamental}, we have $\gamma_{\PPT}(\cN) = 2R_{\PPT}(\cN) - 1$. The value of $R_{\PPT}(\cN)$ is closely related to the generalized robustness of PPT-entanglement~\cite{vidal1999robustness,regula2021fundamental}. Notice that PPT operations are stronger than LOCC operations, this leads to $\gamma_{\LOCC}(\cN) \geq \gamma_{\PPT}(\cN)$. Therefore, for those two-qubit gates in the \textbf{Remark}, one direction has been proven referring to~\cite{Piveteau2022circuit}, which gives an equivalence argument for LO- and LOCC-assisted $\gamma$-factor. It suffices to prove the following,
\begin{equation}\label{app_eq:lowerbound_gamma_ppt_special_gates}
\begin{aligned}
    \gamma_{\PPT}(\rm CNOT) &\geq 3; \\ 
    \gamma_{\PPT}(i\rm{SWAP}) &\geq 7; \\ 
    \gamma_{\PPT}(\rm SWAP) &\geq 7.
\end{aligned}
\end{equation}
We prove these by first defining a new quantity $\widehat{W}(\cN)$ which is derived from taking a relaxation on the programming of $R_{\PPT}(\cN)$ by removing the constraint $\cT_{BB'}(J^{\cM}_{ABA'B'}) \succeq \cT_{BB'}(J^{\cN}_{ABA'B'})$. The optimal value, which we denoted as $\widehat{W}(\cN)$, can be estimated via the following SDP programming,
\begin{equation}\label{app_eq:primal_W_hat_channel}
\begin{aligned}
    \widehat{W}(\cN) = \min_{c\geq 0,\ J_{\cM}} \;\;&  c, \\
    {\rm s.t.} \;\;&  (J^{\cM}_{ABA'B'})^{T_{BB'}}\succeq 0;\\
    \;\;& J^{\cM}_{ABA'B'}\succeq J^{\cN}_{ABA'B'};\\ 
    \;\;& J^{\cM}_{AB} \preceq c I_{AB}.
\end{aligned}
\end{equation}
Clearly, we have, $R_{\PPT}(\cN) \geq \widehat{W}(\cN)$, therefore, $\gamma_{\PPT}(\cN) \geq 
2\widehat{W}(\cN) - 1$. We further derive the dual programming of~\eqref{app_eq:primal_W_hat_channel} follows by writing out the \textit{Lagrangian} of the primal programming of $\widehat{W}(\cN)$ as,
\begin{equation}
\begin{aligned}
    L(P,Q,R) &= c - \tr[P(J^{\cM}_{ABA'B'})^{T_{BB'}}] - \tr[Q(J^{\cM}_{ABA'B'} - J^{\cN}_{ABA'B'})]\\ &\quad+ \tr[R(J^{\cM}_{AB} - cI_{AB})],\\
    &= \tr[QJ^{\cN}_{ABA'B'}] + c(1 - \tr[R]) + \tr[J^{\cM}_{ABA'B'}(R\ox I_{A'B'} - P^{T_{BB'}} - Q)] 
\end{aligned}
\end{equation}
where $P,Q,R\succeq 0$ are the introduced Lagrangian multipliers and $P, Q\in\cL^{\dag}(\cH_{ABA'B'})$, $R\in\cL^{\dag}(\cH_{AB})$. To ensure the dual function non-trivially lower bounded, we have $Q\preceq R\ox I_{A'B'} - P^{T_{BB'}}$ and $\tr[R]\leq 1$. Taking the change of variables $Y = R\ox I_{A'B'} - P^{T_{BB'}}$ and due to Slater's condition, the strong duality holds, and we have the following,
\begin{equation}
\begin{aligned}
    \widehat{W}(\cN) = \max_{Y, Q\succeq 0, R} \;\; &\tr(QJ^{\cN}_{ABA'B'})\\
    {\rm s.t.}   \;\; 
    & Q \preceq Y; \ Q, R\succeq 0; \tr[R] \leq 1;\\
    \;\; 
    & Y^{T_{BB'}} \preceq R^{T_B} \ox I_{A'B'}.
\end{aligned}
\end{equation}
Notice that $Q$ can be now seen as a slackened variable by requiring $Y = Q$, and we, therefore, have the equivalent SDP,
\begin{equation}\label{app_eq:dual_W_hat_channel}
\begin{aligned}
    \widehat{W}(\cN) = \max_{Y, R\succeq 0} \;\; &\tr(Y_{ABA'B'}J^{\cN}_{ABA'B'})\\
    {\rm s.t.}  \;\; 
    & \tr[R_{AB}] \leq 1;\\
    \;\; 
    & Y_{ABA'B'}^{T_{BB'}} \preceq R_{AB}^{T_B} \ox I_{A'B'}.
\end{aligned}
\end{equation}
Now we are ready to prove the inequalities in~\eqref{app_eq:lowerbound_gamma_ppt_special_gates}. Start with the CNOT operation. Notice that we could take $Y = J^{\cN}_{ABA'B'}/8$ and $R_{AB} = I_{AB}/d_{AB}$ as a feasible solution to~\eqref{app_eq:dual_W_hat_channel} since $\tr[R] = 1$ and $|\operatorname{eig}[Y_{ABA'B'}^{T_{BB'}}]| = |\operatorname{eig}[(J^{\cN}_{ABA'B'})^{T_{BB'}}]| / 8 \leq 1/4$. Therefore, the last constraint from~\eqref{app_eq:dual_W_hat_channel} holds and we have,
\begin{equation}
    \widehat{W}(\cN) \geq \frac{1}{8}\tr[(J^{\cN}_{ABA'B'})^2] = \frac{16}{8} = 2,
\end{equation}
where $\tr[(J^{\cN}_{ABA'B'})^2] = 4\tr[J^{\cN}_{ABA'B'}] = 16$ as $J^{\cN}_{ABA'B'}$ is unnormalized pure state for CNOT. We then have proven the statement for CNOT as $\gamma_{\PPT}(\rm{CNOT}) \geq 4 - 1 = 3$. Similarly, we could take $Y = J^{\cN}_{ABA'B'} / 4, R_{AB} = I_{AB}$ as feasible solutions to $\rm{SWAP}$ and $i\rm{SWAP}$ as $|\operatorname{eig}[Y_{ABA'B'}^{T_{BB'}}]| = |\operatorname{eig}[(J^{\cN}_{ABA'B'})^{T_{BB'}}]| / 4 \leq 1$, and therefore finish proofs for the remark.

\section{Proof of Theorem for regularized PPT-assisted sampling cost}~\label{appendix:proof_of_regularized_ppt_sampling_cost}
In order to prove the exponential lower bound for the regularized PPT-assisted $\gamma$-factor with respect to the exact PPT-entanglement cost, we first prove the following lemma for the one-shot scenario.
\begin{lemma}\label{app_lem:gammaPPT_1shot}
For a bipartite quantum channel $\cN_{AB\rightarrow A'B'}$,
\begin{equation}
    \gamma_{\PPT}(\cN_{AB\rightarrow A'B'}) \geq 2^{LN_{\max }\left(\cN_{AB\rightarrow A'B'}\right)} - 1.
\end{equation}
\end{lemma}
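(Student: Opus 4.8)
The plan is to relate the $\gamma_{\PPT}$ SDP (in its equivalent form \eqref{app_eq:modified_SDP_PPT_channel_cost}) to the SDP defining $LN_{\max}$ by exhibiting a feasible point. Recall $\gamma_{\PPT}(\cN) = 2R_{\PPT}(\cN) - 1$, where $R_{\PPT}(\cN)$ is the optimal value $c$ in \eqref{app_eq:modified_SDP_PPT_channel_cost}, so it suffices to show $R_{\PPT}(\cN) \geq 2^{LN_{\max}(\cN)}$, i.e. that the optimal $c$ is lower bounded by $2^{LN_{\max}(\cN)}$; equivalently, $\inf\{\max\{\|P_{AB}\|_\infty, \|P_{AB}^{T_B}\|_\infty\}\} \leq R_{\PPT}(\cN)$, where the infimum is over $P_{ABA'B'} \geq 0$ with $-P^{T_{BB'}} \leq (J^{\cN})^{T_{BB'}} \leq P^{T_{BB'}}$. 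So the natural route is: take an optimal (or near-optimal) solution $(c, J^{\cM}_{ABA'B'})$ of the $\gamma_{\PPT}$ SDP and build from it a feasible $P_{ABA'B'}$ for the $LN_{\max}$ SDP whose objective is at most $c$.

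First I would set $P_{ABA'B'} := J^{\cM}_{ABA'B'}$. Feasibility of the $LN_{\max}$ program requires two things. (i) Positivity: $P \geq 0$ follows since $J^{\cM}$ is a Choi operator of a channel $\cM$ that is positive (we may assume $\cM$ is genuinely CP in the decomposition $\cN = c_1\cM_1 - c_2\cM_2$, taking $P = J^{\cM_1}$). (ii) The sandwich condition $-P^{T_{BB'}} \leq (J^{\cN})^{T_{BB'}} \leq P^{T_{BB'}}$: the upper bound is exactly the constraint $(J^{\cM}_{ABA'B'})^{T_{BB'}} \succeq (J^{\cN}_{ABA'B'})^{T_{BB'}}$ already present in \eqref{app_eq:modified_SDP_PPT_channel_cost}; the lower bound is $P^{T_{BB'}} + (J^{\cN})^{T_{BB'}} \geq 0$, which follows by adding the PPT constraint $(J^{\cM}_{ABA'B'})^{T_{BB'}} \succeq 0$ (hence so is $J^{\cM_2}{}^{T_{BB'}} = (J^{\cM} - J^{\cN})^{T_{BB'}} \succeq 0$, giving $P^{T_{BB'}} \geq -(J^{\cN})^{T_{BB'}}$). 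Then I must bound the objective: $\|P_{AB}\|_\infty = \|J^{\cM}_{AB}\|_\infty = \|c\, I_{AB}\|_\infty = c$ using the marginal constraint $J^{\cM}_{AB} = cI_{AB}$; and $\|P_{AB}^{T_B}\|_\infty = \|(cI_{AB})^{T_B}\|_\infty = c$ as well. Hence $\max\{\|P_{AB}\|_\infty, \|P_{AB}^{T_B}\|_\infty\} = c = R_{\PPT}(\cN)$, so the $LN_{\max}$ infimum is $\le R_{\PPT}(\cN)$, i.e. $2^{LN_{\max}(\cN)} \le R_{\PPT}(\cN)$, and therefore $\gamma_{\PPT}(\cN) = 2R_{\PPT}(\cN) - 1 \ge 2\cdot 2^{LN_{\max}(\cN)} - 1 \ge 2^{LN_{\max}(\cN)} - 1$ — though in fact this argument yields the stronger $\gamma_{\PPT}(\cN) \ge 2^{LN_{\max}(\cN)+1} - 1$, so I should double-check the constant; possibly the clean statement of the lemma only uses the weaker (sufficient) bound, or $P_{AB} = J^{\cM}_{AB} = c I_{AB}$ needs the marginal to be interpreted so that $\|P_{AB}\|_\infty$ corresponds to $2^{LN_{\max}}$ only after accounting for a dimensional/normalization factor.

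The main obstacle I anticipate is precisely this bookkeeping of normalizations and the exact constant: the $LN_{\max}$ SDP as written has $J^{\cN}$ unnormalized (with $\tr_{A'B'} J^{\cN}_{ABA'B'} = I_{AB}$ up to conventions), and the marginal constraint $J^{\cM}_{AB} = cI_{AB}$ together with $\|I_{AB}\|_\infty = 1$ must be matched carefully against the definition of $\|P_{AB}\|_\infty$ in the $LN_{\max}$ program — whether the relevant quantity there is $\|P_{AB}\|_\infty$ or something like $\frac1{d}\|P\|$ or a trace. If the conventions line up as I expect, the proof is essentially the feasible-point construction above plus the two-line manipulation of PPT constraints; the "+1" offset in the lemma (versus a possible "+1" in the exponent) is exactly the kind of thing that has to be tracked through the reduction from \eqref{app_eq:SDP_PPT_channel_cost} to \eqref{app_eq:modified_SDP_PPT_channel_cost}. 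I would therefore write the proof by (1) recalling $\gamma_{\PPT} = 2R_{\PPT}-1$, (2) recalling the $LN_{\max}$ SDP, (3) taking an optimal $P$ for $LN_{\max}$ and conversely showing any $LN_{\max}$-feasible $P$ with value $t$ gives a $\gamma_{\PPT}$-feasible $\cM$ with $c \le$ (appropriate function of $t$) — doing the inclusion in whichever direction makes the constants cleanest — and (4) concluding.
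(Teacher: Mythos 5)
Your overall strategy—constructing a feasible point for the $LN_{\max}$ SDP out of an optimal solution of the $\gamma_{\PPT}$ SDP and comparing objective values—is exactly the paper's route. However, there is a genuine gap in your feasibility check, and it is precisely the step you flagged as suspicious when you noticed your constant came out a factor of two too strong. The choice $P_{ABA'B'} = J^{\cM_1}_{ABA'B'}$ does \emph{not} satisfy the lower half of the sandwich condition. From $(J^{\cM_1} - J^{\cN})^{T_{BB'}} = (J^{\cM_2})^{T_{BB'}} \succeq 0$ you correctly get the upper bound $(J^{\cN})^{T_{BB'}} \leq P^{T_{BB'}}$, but you then assert this also gives $P^{T_{BB'}} \geq -(J^{\cN})^{T_{BB'}}$. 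That inference is invalid: $A \succeq B$ implies $A \succeq -B$ only when $B \succeq 0$, and $(J^{\cN})^{T_{BB'}} \succeq 0$ holds exactly when $\cN$ is already PPT — the trivial case where $LN_{\max}(\cN)=0$. For a non-PPT channel, the negative eigenvalues of $(J^{\cN})^{T_{BB'}}$ are what $P^{T_{BB'}}$ must dominate from below, and that negative part is carried by $J^{\cM_2}$, not by $J^{\cM_1}$: writing $(J^{\cN})^{T_{BB'}} = c(J^{\cM_1})^{T_{BB'}} - (c-1)(J^{\cM_2})^{T_{BB'}}$ (normalized convention), one has $(J^{\cN})^{T_{BB'}} \geq -(c-1)(J^{\cM_2})^{T_{BB'}}$, which cannot be replaced by $-c(J^{\cM_1})^{T_{BB'}}$.

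The fix, which is what the paper does, is to take $P_{ABA'B'} = c\bigl(J^{\cM_1}_{ABA'B'} + J^{\cM_2}_{ABA'B'}\bigr)$ so that both one-sided bounds
\begin{equation*}
-(c-1)(J^{\cM_2})^{T_{BB'}} \leq (J^{\cN})^{T_{BB'}} \leq c(J^{\cM_1})^{T_{BB'}}
\end{equation*}
can be simultaneously relaxed to $\pm P^{T_{BB'}}$ using the positivity of each partial transpose. This $P$ has marginal $P_{AB} = P_{AB}^{T_B} = 2c\,I_{AB}$, hence objective value $2c$ rather than $c$, which yields $2^{LN_{\max}(\cN)} \leq 2c$ and therefore exactly the lemma's bound $\gamma_{\PPT}(\cN) = 2c-1 \geq 2^{LN_{\max}(\cN)} - 1$. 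So the "missing" factor of two you observed is not a normalization convention issue; it is the price of making $P$ genuinely feasible, and it is why the lemma is stated with $2^{LN_{\max}} - 1$ rather than $2\cdot 2^{LN_{\max}} - 1$.
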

\begin{proof}
Recalling the definition of $\gamma_{\PPT}(\cN_{AB\rightarrow A'B'})$ which can be re-expressed as
\begin{equation}
\begin{aligned}
    \gamma_{\PPT}(\cN_{AB\rightarrow A'B'})= \min & \; 2c - 1\\
    {\rm s.t.} \, &\; J_{ABA'B'}^{\cN} = cJ_{ABA'B'}^{\cM_1} - (c-1)J_{ABA'B'}^{\cM_2},\\
                  &\; J_{ABA'B'}^{\cM_1} \geq 0, \, J_{ABA'B'}^{\cM_2} \geq 0,\\
                  &\; (J_{ABA'B'}^{\cM_1})^{T_{BB'}} \geq 0, \, (J_{ABA'B'}^{\cM_2})^{T_{BB'}} \geq 0,\\
                  &\; J_{AB}^{\cM_1} = I_{AB},\, J_{AB}^{\cM_2} = I_{AB}.
\end{aligned}
\end{equation}
Suppose $\{J_{ABA'B'}^{\cM_1}, J_{ABA'B'}^{\cM_2}, c\}$ is a feasible solution for $\gamma_{\PPT}(\cN)$. Then we have 
\begin{equation}
    (J_{ABA'B'}^{\cN})^{T_{BB'}} + (c-1)(J_{ABA'B'}^{\cM_2})^{T_{BB'}} \geq 0, \; c(J_{ABA'B'}^{\cM_1})^{T_{BB'}} - (J_{ABA'B'}^{\cN})^{T_{BB'}} \geq 0
\end{equation}
which yields
\begin{equation}\label{Eq:JN_ineq1}
    -(c-1)(J_{ABA'B'}^{\cM_2})^{T_{BB'}} \leq (J_{ABA'B'}^{\cN})^{T_{BB'}} \leq c(J_{ABA'B'}^{\cM_1})^{T_{BB'}}.
\end{equation}
By the fact that $(J_{ABA'B'}^{\cM_1})^{T_{BB'}} \geq 0$ and $(J_{ABA'B'}^{\cM_2})^{T_{BB'}}\geq 0$, we have
\begin{equation}
    -(c-1)(J_{ABA'B'}^{\cM_1}+J_{ABA'B'}^{\cM_2})^{T_{BB'}} \leq (J_{ABA'B'}^{\cN})^{T_{BB'}} \leq c(J_{ABA'B'}^{\cM_1}+J_{ABA'B'}^{\cM_2})^{T_{BB'}}
\end{equation}
which further gives
\begin{equation}
    -c(J_{ABA'B'}^{\cM_1}+J_{ABA'B'}^{\cM_2})^{T_{BB'}} \leq (J_{ABA'B'}^{\cN})^{T_{BB'}} \leq c(J_{ABA'B'}^{\cM_1}+J_{ABA'B'}^{\cM_2})^{T_{BB'}}.
\end{equation}
Let $P_{ABA'B'} = c(J_{ABA'B'}^{\cM_1} + J_{ABA'B'}^{\cM_2})$. We have that $P_{ABA'B'}$ is a feasible solution for $LN_{\max }\left(\cN_{AB\rightarrow A'B'}\right)$ and notice that by taking the partial trace on the subsystem $A'B'$, we have,
\begin{equation}
    P_{AB} = P_{AB}^{T_B}= c(J_{AB}^{\cM_1} + J_{AB}^{\cM_1}) = 2c \cdot I_{AB}.
\end{equation}
Now we know that for each feasible solution $\{J_{ABA'B'}^{\cM_1}, J_{ABA'B'}^{\cM_2}, c\}$ of $\gamma_{\PPT}(\cN_{AB\rightarrow A'B'})$ with an objective value $2c-1$, there exists a feasible solution for $LN_{\max}(\cN_{AB\rightarrow A'B'})$ that gives an objective value less or equal to $\log(2c)$. Therefore, we have,
\begin{equation}
\begin{aligned}
    &2^{LN_{\max}(\cN_{AB\rightarrow A'B'})}\leq 2c\\
    \Rightarrow & \ \gamma_{\PPT}(\cN_{AB\rightarrow A'B'}) \geq 2^{LN_{\max}(\cN_{AB\rightarrow A'B'})} - 1.
\end{aligned}
\end{equation}
By the fact that $E^{\PPT}_{C,0}(\cN_{AB\rightarrow A'B'}) \leq LN_{\max}(\cN_{AB\rightarrow A'B'})$, we can also conclude that 
\begin{equation}
    \gamma_{\PPT}(\cN_{AB\rightarrow A'B'}) \geq 2^{E^{\PPT}_{C,0}(\cN_{AB\rightarrow A'B'})} - 1.
\end{equation}
\end{proof}
\begin{remark}
    Since the exact entanglement cost of NPT-entanglement is no smaller than the exact distillable NPT-entanglement, which is no smaller than NPT entanglement generating power~\cite{Gour2020,Bauml2018,Bennett2003}, i.e., 
    \begin{equation}
        E^{\PPT}_{D}(\cN_{AB\rightarrow A'B'}) \leq E_{C}^{\PPT}(\cN_{AB\rightarrow A'B'})
    \end{equation}
    we can obtain similar lower bounds for $\gamma_{\PPT}(\cN_{AB\rightarrow A'B'})$ correspondingly up to a constant factor.
\end{remark}

\begin{theorem}
    For a bipartite quantum channel $\cN$, the LOCC-assisted regularized $\gamma$-factor is lower bounded by the exponential of the (parallel) NPT-entanglement cost of $\cN$,
    \begin{equation}
        \gamma^{\infty}_{\PPT}(\cN_{AB\rightarrow A'B'}) \geq 2^{E^{\PPT}_{C,0}(\cN_{AB\rightarrow A'B'})}.
    \end{equation} 
\end{theorem}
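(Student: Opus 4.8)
The plan is to lift the one-shot estimate of Lemma~\ref{app_lem:gammaPPT_1shot} to the regularized quantity by evaluating it on tensor powers and then extracting an $n$-th root. Since $\cN^{\ox n}$ is again a bipartite channel across the cut $A^n:B^n$, Lemma~\ref{app_lem:gammaPPT_1shot} applied to it gives, for every $n\in\NN$,
\begin{equation}
    \gamma_{\PPT}(\cN^{\ox n}) \;\geq\; 2^{LN_{\max}(\cN^{\ox n})} - 1 .
\end{equation}
Raising both sides to the power $1/n$, recalling $\gamma^{(n)}_{\PPT}(\cN)=\bigl(\gamma_{\PPT}(\cN^{\ox n})\bigr)^{1/n}$ from Eq.~\eqref{eq:effective_gamma_factor}, and then passing to the limit $n\to\infty$ that defines $\gamma^{\infty}_{\PPT}$, I obtain
\begin{equation}
    \gamma^{\infty}_{\PPT}(\cN) \;\geq\; \limsup_{n\to\infty}\,\bigl(2^{LN_{\max}(\cN^{\ox n})} - 1\bigr)^{1/n}.
\end{equation}

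The crux is to show that the additive ``$-1$'' is asymptotically harmless, i.e. that the right-hand side is at least $2^{\ell}$, where $\ell := \limsup_{n}\tfrac1n LN_{\max}(\cN^{\ox n})$. If $\ell=0$ this is immediate, since $\gamma^{\infty}_{\PPT}(\cN)\geq 1 = 2^{0}$ always. If $\ell>0$, I pass to a subsequence $(n_k)$ realizing the $\limsup$, along which $LN_{\max}(\cN^{\ox n_k})\to\infty$; then for large $k$ the elementary inequality $2^{x}-1\geq 2^{x-1}$ (valid for $x\geq 1$) yields
\begin{equation}
    \bigl(2^{LN_{\max}(\cN^{\ox n_k})} - 1\bigr)^{1/n_k} \;\geq\; 2^{(LN_{\max}(\cN^{\ox n_k})-1)/n_k} \;=\; 2^{LN_{\max}(\cN^{\ox n_k})/n_k}\cdot 2^{-1/n_k},
\end{equation}
whose right-hand side tends to $2^{\ell}$ as $k\to\infty$; hence $\gamma^{\infty}_{\PPT}(\cN)\geq 2^{\ell}$. (That $\ell$ is finite, which is all one needs to justify these manipulations, follows from the subadditivity $LN_{\max}(\cN^{\ox n})\leq n\,LN_{\max}(\cN)$ obtained by tensoring feasible solutions of the SDP defining $LN_{\max}$.)

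It then remains to compare $\ell$ with $E^{\PPT}_{C,0}(\cN)$. Here I invoke the single-letter bound $E^{(1)}_{\PPT,C,0}(\cM)\leq LN_{\max}(\cM)$ from Refs.~\cite{Bauml2019resource,Gour2021entanglement} (the very fact behind the inequality $E^{\PPT}_{C,0}\leq LN_{\max}$ quoted in the main text), applied to $\cM=\cN^{\ox n}$: dividing by $n$ and taking $\limsup_n$ gives $E^{\PPT}_{C,0}(\cN)=\limsup_n\tfrac1n E^{(1)}_{\PPT,C,0}(\cN^{\ox n})\leq\limsup_n\tfrac1n LN_{\max}(\cN^{\ox n})=\ell$. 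Chaining this with the previous step completes the proof: $\gamma^{\infty}_{\PPT}(\cN)\geq 2^{\ell}\geq 2^{E^{\PPT}_{C,0}(\cN)}$.

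Apart from Lemma~\ref{app_lem:gammaPPT_1shot} (already in hand), the argument is essentially bookkeeping with regularizations, so the single delicate point is the middle step: one must separate the case of vanishing $LN_{\max}$-rate (where the claimed bound is trivial) from the case of positive rate (where $LN_{\max}(\cN^{\ox n})$ grows at least linearly along a subsequence, so the ``$-1$'' washes out under the $n$-th root), and resist the temptation of replacing $2^{LN_{\max}(\cN^{\ox n})}-1$ by $2^{LN_{\max}(\cN^{\ox n})}$ at finite $n$, which is false already for $n=1$ (e.g.\ $\gamma_{\PPT}(\mathrm{CNOT})=3=2^{2}-1$ while $LN_{\max}(\mathrm{CNOT})=2$).
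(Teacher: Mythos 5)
Your proof is correct and shares the paper's skeleton — apply the one-shot Lemma~\ref{app_lem:gammaPPT_1shot} to $\cN^{\ox n}$, take $n$-th roots, and pass to the limit — but it diverges at the key intermediate step. The paper invokes the \emph{super-additivity} of $LN_{\max}$ (cited from Refs.~\cite{Bauml2019resource,Gour2021entanglement}) to replace $LN_{\max}(\cN^{\ox n})$ by $n\,LN_{\max}(\cN)$, which yields the single-letter, SDP-computable bound $\gamma^{\infty}_{\PPT}(\cN)\geq 2^{LN_{\max}(\cN)}$ and only then applies $LN_{\max}(\cN)\geq E^{\PPT}_{C,0}(\cN)$; that stronger intermediate bound is actually part of the theorem as stated in the main text. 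You instead keep the regularized rate $\ell=\limsup_n\tfrac1n LN_{\max}(\cN^{\ox n})$ and compare it to $E^{\PPT}_{C,0}(\cN)$ by regularizing the one-shot inequality $E^{(1)}_{\PPT,C,0}(\cN^{\ox n})\leq LN_{\max}(\cN^{\ox n})$, so you never need super-additivity — only the easy sub-additivity for finiteness — which is a genuine economy given that the paper itself flags the non-additivity of $LN_{\max}$ as a delicate point. (Strictly, the one-shot cost is $\log k$ for integer $k$, so the one-shot bound carries an integer-rounding correction of at most one bit, but this washes out under $\tfrac1n(\cdot)$ exactly like your ``$-1$''.) Your explicit case split on $\ell=0$ versus $\ell>0$ also patches an edge case the paper's limit computation $\lim_n\tfrac1n\log(2^{nL}-1)=L$ silently skips when $L=0$. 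The trade-off: your route proves the stated inequality $\gamma^{\infty}_{\PPT}\geq 2^{E^{\PPT}_{C,0}}$ with fewer inputs, while the paper's route additionally delivers the efficiently computable lower bound $2^{LN_{\max}(\cN)}$.
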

\begin{proof}
    The proof of the theorem is stated as follows. For $n$-shot scenario, the above proof of Lemma~\ref{app_lem:gammaPPT_1shot} provides a lower bound of simulating $n$ parallel $\cN_{AB\rightarrow A'B'}$ as,
    \begin{equation}
        \gamma_{\PPT}(\cN^{\ox n}) \geq 2^{LN_{\max}(\cN^{\ox n})} - 1.
    \end{equation}
    By~\cite{Bauml2019resource,Gour2021entanglement}, we also have $LN_{\max}$ is super-additive, which therefore implies 
    \begin{equation}\label{eq:LN_max_nshot}
        \gamma_{\PPT}(\cN^{\ox n}) \geq 2^{nLN_{\max}(\cN)} - 1.  
    \end{equation}
    Then, by taking $n$-th root of both sides of the inequality and applying the limitation as $n\rightarrow \infty$, we could derive,
    \begin{equation}
    \begin{aligned}
        \log(\gamma^{\infty}_{\PPT}(\cN)) &= \lim_{n\rightarrow \infty}\log(\gamma^{(n)}_{\PPT}(\cN)) \geq \lim_{n\rightarrow \infty} \frac{1}{n}\log\left(2^{nLN_{\max}(\cN)} - 1\right) = LN_{\max}(\cN)\\
        &\Rightarrow \gamma^{\infty}_{\PPT}(\cN) \geq 2^{LN_{\max}(\cN)} \geq 2^{E^{\PPT}_{C,0}(\cN)},
    \end{aligned}
    \end{equation}
    as desired.
\end{proof}

\section{Proof of Theorem for regularized SEP-assisted sampling cost}\label{appendix:proof_of_regularized_sep_sampling_cost}
In order to prove the exponential lower bound for the regularized SEP-assisted $\gamma$-factor with respect to the exact SEP-entanglement cost, we first prove the following lemma for the one-shot scenario.
\begin{lemma}\label{app_lem:gammaSEP_1shot}
For a bipartite quantum channel $\cN_{AB\rightarrow A'B'}$,
\begin{equation}
    \gamma_{\SEP}(\cN_{AB\rightarrow A'B'}) \geq 2^{E_{\SEP,C,0}^{(1)}(\cN_{AB\rightarrow A'B'})} - 1.
\end{equation} 
\end{lemma}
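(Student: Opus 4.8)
The plan is to run the argument of Lemma~\ref{app_lem:gammaPPT_1shot} with the PPT envelope and the max-logarithmic negativity replaced by their separable counterparts. First I would put $\gamma_{\SEP}(\cN)$ into two-term form. Since every channel in a SEP-QPD $\cN=\sum_j\alpha_j\cM_j$ is trace preserving, evaluating $\tr[\cN(\rho)]$ forces $\sum_j\alpha_j=1$; collecting the positive and the negative coefficients separately and using convexity of the set of SEP channels, any SEP-QPD can be rewritten as $\cN=c\,\cM_1-(c-1)\,\cM_2$ with $\cM_1,\cM_2$ SEP channels, $c\ge 1$, and $\sum_j|\alpha_j|=2c-1$. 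Hence at optimality $\gamma_{\SEP}(\cN)=2c-1$ for some feasible data $\{J^{\cM_1},J^{\cM_2},c\}$ with $J^{\cM_1},J^{\cM_2}$ separable operators of $AB$-marginal $I_{AB}$ and $J^{\cN}_{ABA'B'}=cJ^{\cM_1}-(c-1)J^{\cM_2}$.

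Next, exactly as on the PPT side, I would extract a \emph{separable} envelope for $J^{\cN}$: put $P_{ABA'B'}:=c\,(J^{\cM_1}+J^{\cM_2})$. Then $P$ is separable (a nonnegative combination of separable operators), and $J^{\cN}=cJ^{\cM_1}-(c-1)J^{\cM_2}$ together with $J^{\cM_1},J^{\cM_2}\ge 0$ gives $-P\le J^{\cN}\le P$, while $P_{AB}=2c\,I_{AB}$, so $\|P_{AB}\|_\infty=2c$. Consequently the separable analogue of the max-logarithmic negativity,
\begin{equation*}
  LN_{\max}^{\SEP}(\cN):=\log\inf\big\{\|P_{AB}\|_\infty:\ -P\le J^{\cN}_{ABA'B'}\le P,\ P\ \text{separable}\big\},
\end{equation*}
satisfies $LN_{\max}^{\SEP}(\cN)\le\log(2c)$; equivalently $J^{\cN}\le cJ^{\cM_1}$ exhibits $\cN$ as dominated by $c$ times a SEP channel.

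The decisive remaining step is to show that $LN_{\max}^{\SEP}$ upper bounds the one-shot exact SEP-entanglement cost, $E_{\SEP,C,0}^{(1)}(\cN)\le LN_{\max}^{\SEP}(\cN)$ --- the separable counterpart of the bound $E^{\PPT}_{C,0}\le LN_{\max}$ invoked in Lemma~\ref{app_lem:gammaPPT_1shot} and established in Refs.~\cite{Bauml2019resource,Gour2021entanglement}. I would prove it constructively: when $J^{\cN}$ lies under a separable operator with $AB$-marginal $k\,I_{AB}$ for a suitable integer $k$, one simulates $\cN$ by having Alice and Bob consume $\Phi_{\bar A\bar B}(k)$ and then perform a Bell-type teleportation step that ``writes in'' the Choi operator of $\cN$; the resulting global operation $\Lambda_{AB\bar A\bar B\to A'B'}$ is separable because the only Alice--Bob correlations it injects are those carried by the $k$-dimensional maximally entangled state, controlled by the separable envelope $P$. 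Chaining the three steps,
\begin{equation*}
  2^{E_{\SEP,C,0}^{(1)}(\cN)}\ \le\ 2^{LN_{\max}^{\SEP}(\cN)}\ \le\ 2c\ =\ \gamma_{\SEP}(\cN)+1 ,
\end{equation*}
which is the claimed inequality.

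I expect the genuine difficulty to sit entirely in that last step. Unlike the PPT argument one cannot ``run the two-term QPD in reverse'', since $c\,\cM_1-\cN$ need not be a SEP channel; one must exhibit an honest SEP simulation protocol and verify its separability directly from the separable envelope. A further subtlety, absent on the PPT side where the intermediate quantity is a continuous SDP, is the passage to an \emph{integer}-dimensional maximally entangled state: I would absorb the rounding into the estimate (e.g.\ via $\lceil c\rceil\le 2c$ for $c\ge 1$) so that the ``$-1$'' in the statement is retained. The two-term reduction and the attainment of the minimum defining $\gamma_{\SEP}$ are routine by convexity and compactness.
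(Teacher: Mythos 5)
Your reduction to a two-term QPD and your identification of the crux --- that one must exhibit an honest one-shot SEP simulation of $\cN$ consuming $\Phi_k$ with $k = O(c)$ --- are both correct, and your envelope $P = c(J^{\cM_1}+J^{\cM_2})$ with $P_{AB}=2cI_{AB}$ is a sound observation. But the decisive step is exactly the one you leave unproven, and the route you sketch for it does not close the gap. The inequality $E^{(1)}_{\SEP,C,0}(\cN)\le LN_{\max}^{\SEP}(\cN)$ for your newly defined separable negativity is not a known result you can cite (the cited references establish only the PPT version, whose proof uses a PPT-specific simulation), and your constructive argument for it --- ``a Bell-type teleportation step that writes in the Choi operator, separable because the only correlations injected are those of $\Phi_k$ controlled by the separable envelope'' --- is an assertion, not a verification. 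Teleportation-based channel simulation requires covariance/twirling structure that a generic $\cN$ lacks, and nothing in the existence of a separable operator $P$ dominating $\pm J^{\cN}$ tells you that the resulting global map sends separable states to separable states.

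The paper closes this gap without any envelope or negativity-type quantity, by using the QPD channels themselves. Writing the optimal decomposition as $\cN=(c+1)\cN_1-c\cN_2$ with $\cN_1,\cN_2$ separability-preserving, it sets $k=2\lceil c\rceil$ and defines
\begin{equation*}
\Lambda(\sigma_{A\bar AB\bar B})\;=\;\tr\bigl[\Phi_k\,\sigma_{\bar A\bar B}\bigr]\,\cN(\sigma_{AB})\;+\;\tr\bigl[(I-\Phi_k)\,\sigma_{\bar A\bar B}\bigr]\,\cN_2(\sigma_{AB}).
\end{equation*}
On a separable input the weight $p=\tr[\Phi_k\sigma_{\bar A\bar B}]$ is at most $1/k$, and then $p\cN+(1-p)\cN_2=p(c+1)\cN_1+\bigl(1-p(c+1)\bigr)\cN_2$ is a genuine convex combination of SEP channels because $p(c+1)\le (c+1)/(2\lceil c\rceil)\le 1$; so $\Lambda$ is separability-preserving. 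On inputs of the form $\rho_{AB}\ox\Phi_k$ it reproduces $\cN(\rho_{AB})$ exactly, giving $E^{(1)}_{\SEP,C,0}(\cN)\le\log k=\log(2\lceil c\rceil)$ and hence the claimed bound after the same rounding you anticipated. Note this \emph{is} a way of ``running the QPD in reverse'' --- not by asking $c\cM_1-\cN$ to be SEP, but by mixing $\cN$ with enough of the negative-part channel $\cN_2$ to return to the convex hull of $\{\cN_1,\cN_2\}$; this is the idea from the entanglement-manipulation literature the paper credits, and it is the ingredient your proposal is missing.
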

\begin{proof}
Suppose $\gamma_{\SEP}(\cN) = 2c+1$ with the optimal decomposition, $\cN = (c+1)\cN_1 - c\cN_2$ where $\cN_1, \cN_2\in \mathrm{SEPP}(A:B)$. Now we construct a map $\Lambda$ on system $A\Bar{A}B\Bar{B}$ as follows, where $|\Bar{A}| = |\Bar{B}| = k$ and $k:=2\lceil c \rceil$ s.t., any input state $\sigma\in\cD(\cH_{A\Bar{A}}\ox\cH_{B\Bar{B}})$,
\begin{equation}
    \Lambda(\sigma_{A\Bar{A}B\Bar{B}}) = \tr[\Phi_k\sigma_{\Bar{A}\Bar{B}}]\cN(\sigma_{AB}) + \tr[(I-\Phi_{k})\sigma_{\Bar{A}\Bar{B}}] \cN_2(\sigma_{AB}),
\end{equation}
where we denote $\sigma_{AB} := \tr_{\Bar{A}\Bar{B}}\sigma_{A\Bar{A}B\Bar{B}}$ and $\sigma_{\Bar{A}\Bar{B}} := \tr_{AB}\sigma_{A\Bar{A}B\Bar{B}}$, and denote by $\Phi_k$ the maximally entangled state in system $\Bar{A}\Bar{B}$. In the following, we shall show that $\Lambda\in \mathrm{SEPP}(A\Bar{A}:B\Bar{B})$. Note that for all $\sigma_{A\Bar{A}B\Bar{B}}\in\SEP(A\Bar{A}:B\Bar{B})$, we can write $\sigma_{A\Bar{A}B\Bar{B}} = \sum_{j}p_j\sigma_{A\Bar{A}}^{(j)}\ox\sigma_{B\Bar{B}}^{(j)}$. It follows that $\sigma_{\Bar{A}\Bar{B}} = \sum_{j}p_j\sigma_{\Bar{A}}^{(j)}\ox\sigma_{\Bar{B}}^{(j)}$ is a separable state in system $\Bar{A}\Bar{B}$, so is $\sigma_{AB}$. Then, by the property of the maximally entangled state, we have $\tr[\Phi_k\sigma_{\Bar{A}\Bar{B}}]\leq 1/k$. Therefore, if we denote the measurement probability $p = \tr[\Phi_k \sigma_{\Bar{A}\Bar{B}}]$ and $1-p = \tr[(I_{AB} - \Phi_k) \sigma_{\Bar{A}\Bar{B}}]$ for all $\sigma_{A\Bar{A}B\Bar{B}} \in \SEP(A\Bar{A}:B\Bar{B})$, it holds that
\begin{equation}
    \Lambda(\sigma_{A\Bar{A}B\Bar{B}}) = p\cN(\sigma_{AB}) + (1-p) \cN_2(\sigma_{AB}),
\end{equation}
for some $p\leq 1/k$, which leads to $(1-p)/p \geq k-1 = 2\lceil c \rceil - 1 \geq c$ as $c\geq 1$, and we have $p\cN + (1-p) \cN_2 \in \mathrm{SEPP}(A:B)$ by the definition of $\gamma_{\SEP}(\cN)$. Thus, $\Lambda(\sigma_{A\Bar{A}B\Bar{B}})\in \SEP(A\Bar{A}:B\Bar{B})$ which indicate that $\Lambda\in \mathrm{SEPP}(A\Bar{A}:B\Bar{B})$. 
Now consider input states having form $\rho_{AB}\ox \Phi_k$. We have for $\Lambda\in \mathrm{SEPP}(A\Bar{A}:B\Bar{B})$,
\begin{equation}
\Lambda(\rho_{AB}\ox \Phi_k) = \cN(\rho_{AB}), \; \forall \rho_{AB}\in \cD(\cH_A\ox\cH_B),
\end{equation}
which then shows that $\Lambda$ forms a one-shot SEP simulation protocol of $\cN$ consuming $\Phi_k$. By the definition of the exact SEP-entanglement cost of $\cN$, we have $E_{\SEP,C,0}^{(1)}(\cN)\leq \log k$. Therefore, by $2c+1\geq 2\lceil c\rceil-1$, we have
\begin{equation}
    \gamma_{\SEP}(\cN) \geq 2^{E_{\SEP,C,0}^{(1)}(\cN)} - 1.
\end{equation}
We also want to remark the above proof follows a similar idea from Ref.~\cite{Chitambar2017entanglement}.
\end{proof}

Besides, our upper bound can be directly derived using the results from Ref.~\cite{Kim2021one} about one-shot dynamic entanglement manipulation.
\begin{equation}
    E^{(1)}_{\SEP, C, 0}(\cN) \leq \log(R_s^{\SEP}(\cN)) + 2.
\end{equation}
By simply exponentiating the inequality and re-arranging the terms, we can also derive,
\begin{equation}
    2^{E^{(1)}_{\SEP, C, 0}(\cN)} \leq 4R_s^{\SEP}(\cN) = 2(\gamma_{\SEP}(\cN)-1) \Rightarrow \gamma_{\SEP}(\cN) \geq 2^{E^{(1)}_{\SEP, C, 0}(\cN) - 1} + 1.
\end{equation}
Both of the above attempts can lead to the lower bound for the regularized sampling overhead of any bipartite channels.

\begin{corollary}
    For any bipartite quantum channel $\cN_{AB\rightarrow A'B'}$,
    \begin{equation}
        \gamma^{\infty}_{\SEP}(\cN_{AB\rightarrow A'B'}) \geq 2^{E^{\SEP}_{C,0}(\cN_{AB\rightarrow A'B'})}.
    \end{equation}
\end{corollary}    
\begin{proof}
Firstly, from the faithfulness, we know that $E^{\SEP}_{C,0}(\cN)=0$ and $\gamma_{\SEP}^{\infty}(\cN) \geq 1=2^{E^{\SEP}_{C,0}(\cN)}$ for all $\cN\in \mathrm{SEPP}(A:B)$. Next, as long as the minimum dimension of the consumed entanglement resource must be an integer, we have $E_{\SEP,C,0}^{(1)}(\cN)\geq 1$, for all $\cN\notin \mathrm{SEPP}(A:B)$ and, therefore, $2^{E_{\SEP,C,0}^{(1)}(\cN)} - 1 \geq 2^{E_{\SEP,C,0}^{(1)}(\cN)-1}$. Then fix $n\geq 1$. By Lemma~\ref{app_lem:gammaSEP_1shot}, we have
\begin{equation}\label{Eq:gamma_SEP_infy}
    \gamma^{(n)}_{\SEP}(\cN)= (\gamma_{\SEP}(\cN^{\ox n}))^{\frac{1}{n}} \geq (2^{E_{\SEP,C,0}^{(1)}(\cN^{\ox n})-1})^{\frac{1}{n}}= 2^{\frac{1}{n}E_{\SEP,C,0}^{(1)}(\cN^{\ox n})-\frac{1}{n}}.
\end{equation}
Taking the limit $n\rightarrow \infty$ on both sides of Eq.~\eqref{Eq:gamma_SEP_infy} and applying the continuity of exponential function, we have
\begin{equation}
    \gamma_{\SEP}^{\infty}(\cN) = \lim_{n\rightarrow \infty}\gamma^{(n)}_{\SEP}(\cN) \geq 2^{\lim\limits_{n\rightarrow \infty}\left(\frac{1}{n}E_{\SEP,C,0}^{(1)}(\cN^{\ox n})-\frac{1}{n}\right)}= 2^{E_{C,0}^{\SEP}(\cN)}.
\end{equation}
\end{proof}

\section{Proof of the bidirectional max-Rains information lower bound of the regularized sampling cost}~\label{appendix:proof_of_prop_gamma_ppt_lb_channel}
Stands on another point of view for finding an additive lower bound for one-shot exact $\gamma_{\PPT}(\cN)$, one can recall the bidirectional max-Rains information of any bipartite channel and derive the following lemma,
\begin{lemma}\label{app_lem:k_copy_ppt_lower_bound_for_channel}
    The PPT-assisted $\gamma$-factor for $k$-parallel cutting of a bipartite channel $\cN_{AB \rightarrow A'B'}$ is lower bounded by,
    \begin{equation}
        \gamma_{\PPT}(\cN^{\otimes n}) \geq 2^{n\cdot R^{2\rightarrow 2}_{\max}(\cN) + 1} - 1,
    \end{equation}
    where $R^{2\rightarrow 2}_{\max}(\cN)$ is bidirectional max Rains information of $\cN$.
\end{lemma}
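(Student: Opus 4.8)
The plan is to prove a one-shot inequality relating $\gamma_{\PPT}$ to the bidirectional max-Rains quantity, and then tensorize it.

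\emph{One-shot step.} First I would establish that, for an arbitrary bipartite channel $\cM_{AB\to A'B'}$,
\[
\gamma_{\PPT}(\cM)\;\geq\;2^{\,R_{\max}^{2\rightarrow 2}(\cM)+1}-1 .
\]
Mirroring the proof of Lemma~\ref{app_lem:gammaPPT_1shot}, fix an optimal PPT-QPD of $\cM$, so that $\gamma_{\PPT}(\cM)=2c-1$ with $J^{\cM}=cJ^{\cM_1}-(c-1)J^{\cM_2}$, where $J^{\cM_i}\geq 0$, $(J^{\cM_i})^{T_{BB'}}\geq 0$, $J^{\cM_i}_{AB}=I_{AB}$, and $c\geq 1$. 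The crucial observation is that $cJ^{\cM_1}=J^{\cM}+(c-1)J^{\cM_2}\geq J^{\cM}$, since $(c-1)J^{\cM_2}\geq 0$. I would then feed the pair $(V,Y)=\bigl(c\,(J^{\cM_1})^{T_{BB'}},\,0\bigr)$ into the SDP defining $\Gamma^{2\rightarrow 2}(\cM)$: positivity $V\geq 0$ is exactly the PPT condition on $\cM_1$; the constraint $(V-Y)^{T_{BB'}}=cJ^{\cM_1}\geq J^{\cM}$ is the observation above; and the objective, the spectral norm of the partial trace of $V+Y$ over the channel-output systems, collapses to $c\,\|I_{AB}\|_\infty=c$, because the output-side partial transpose disappears under that partial trace while trace preservation of $\cM_1$ ($J^{\cM_1}_{AB}=I_{AB}$) normalizes the surviving marginal. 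Hence $\Gamma^{2\rightarrow 2}(\cM)\leq c$, i.e.\ $c\geq 2^{R_{\max}^{2\rightarrow 2}(\cM)}$, and $\gamma_{\PPT}(\cM)=2c-1\geq 2^{R_{\max}^{2\rightarrow 2}(\cM)+1}-1$. Note that the construction uses only $J^{\cM_1}$, so we pay $c$ rather than $2c$; this is precisely why the exponent here carries a ``$+1$'', in contrast to the $LN_{\max}$ bound which symmetrizes over $J^{\cM_1}+J^{\cM_2}$.

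\emph{Tensorization.} Applying the one-shot bound to $\cM=\cN^{\otimes n}$ gives $\gamma_{\PPT}(\cN^{\otimes n})\geq 2^{R_{\max}^{2\rightarrow 2}(\cN^{\otimes n})+1}-1$. It then remains to lower bound $R_{\max}^{2\rightarrow 2}(\cN^{\otimes n})$ by $n\,R_{\max}^{2\rightarrow 2}(\cN)$, equivalently $\Gamma^{2\rightarrow 2}(\cN^{\otimes n})\geq \Gamma^{2\rightarrow 2}(\cN)^{n}$. This (super)additivity of the bidirectional max-Rains information is known~\cite{Bauml2018}; combining it with the previous inequality yields $\gamma_{\PPT}(\cN^{\otimes n})\geq 2^{\,n\,R_{\max}^{2\rightarrow 2}(\cN)+1}-1$, as claimed. (The one-shot step and the tensorization can also be merged by working with $\cM=\cN^{\otimes n}$ throughout.)

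The SDP feasibility checks and the partial-transpose identity $\tr_{A'B'}\!\bigl(X^{T_{BB'}}\bigr)=(\tr_{A'B'}X)^{T_{B}}$ are routine; the two points that need care are (i) keeping the Choi and partial-transpose conventions straight so that the $\Gamma^{2\rightarrow 2}$ objective genuinely evaluates to $c$ (in particular that the partial trace in the objective is over the \emph{output} systems, where trace preservation of $\cM_1$ acts), and (ii) the superadditivity step $\Gamma^{2\rightarrow 2}(\cN^{\otimes n})\geq \Gamma^{2\rightarrow 2}(\cN)^{n}$, which is the nontrivial direction and does not follow from tensoring primal feasible points: one either invokes the known additivity of bidirectional max-Rains information or reproves it from the dual SDP, whose optimizers tensorize. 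Step (ii) is the main obstacle; everything else is a short calculation.
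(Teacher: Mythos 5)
Your proof is correct and follows essentially the same route as the paper: a one-shot SDP feasibility argument establishing $\Gamma^{2\rightarrow 2}(\cN)\leq c$ for the optimal PPT-QPD coefficient $c$ (hence $\gamma_{\PPT}(\cN)\geq 2\cdot 2^{R_{\max}^{2\rightarrow 2}(\cN)}-1$), followed by the known additivity of the bidirectional max-Rains information to tensorize. The only cosmetic difference is that you build a primal feasible point $(V,Y)=(c(J^{\cM_1})^{T_{BB'}},0)$ for $\Gamma^{2\rightarrow 2}$ directly from the QPD, whereas the paper passes through the relaxed robustness quantity $\widehat{W}(\cN)$ and compares the dual programs; both reduce to the same inequality $R_{\PPT}(\cN)\geq \Gamma^{2\rightarrow 2}(\cN)$.
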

\begin{proof}
Let us start with a one-shot situation. Given a quantum channel $\cN$, recalling the above relations as $\gamma_{\PPT}(\cN) \geq 2\widehat{W}(\cN) - 1$ where $\widehat{W}(\cN)$ can be evaluated via SDP~\eqref{app_eq:dual_W_hat_channel}. Observing that the definition of bidirectional max-Rains information from~\cite{Bauml2018, Wang2018semidefinite} of any  bipartite channel $\cN$, i.e., $R^{2\rightarrow 2}_{\max}(\cN)\coloneqq \log\Gamma^{2\rightarrow 2}(\cN)$ where $\Gamma^{2\rightarrow 2}(\cN)$  is evaluated by the following dual SDP programming,
\begin{equation}
\begin{aligned}
    \Gamma^{2\rightarrow 2}(\cN) = \max_{Y, R\succeq 0}\Big\{\tr[J^{\cN}_{ABA'B'} Y], \tr[R_{AB}] = 1, -R_{AB}\ox I_{A'B'} \preceq Y_{ABA'B'}^{T_{BB'}} \preceq R_{AB}\ox I_{A'B'} \Big\},
\end{aligned}
\end{equation}
which can be seen as a relaxation of the programming~\eqref{app_eq:dual_W_hat_channel} for $\widehat{W}(\cN)$. Combine the above, we have,
\begin{equation}
    R_{\PPT}(\cN) \geq \widehat{W}(\cN) \geq \Gamma^{2\rightarrow 2}(\cN),
\end{equation}
and as a result, for the single-copy case, 
\begin{equation}\label{app_eq:gamma_ppt_lb_single_cut}
    \gamma_{\PPT}(\cN) \geq 2\Gamma^{2\rightarrow 2}(\cN) - 1 = 2\cdot 2^{R^{2\rightarrow 2}_{\max}(\cN)} - 1.
\end{equation}
\end{proof}

Now suppose the entanglement factory is open, and we are allowed to use PPT-entanglement while simulation for the $n$-call situation, 
based on the additivity~\cite{Wang2018semidefinite} of the max-Rains information, we could prove that for the situation of cutting $n$ parallel uses of $\cN$.
\begin{equation}
    \gamma_{\PPT}(\cN^{\otimes n}) \geq 2\cdot 2^{R^{2\rightarrow 2}_{\max}(\cN^{\otimes n})} - 1 = 2^{nR^{2\rightarrow 2}_{\max}(\cN) + 1} - 1.
\end{equation}
Another direct observation from the above relation, taking $\rm CNOT$ gate as an example, is that, 
\begin{equation}
    \gamma_{\LOCC}(\operatorname{CNOT}^{\ox n}) = \gamma_{\PPT}(\operatorname{CNOT}^{\ox n}) = 2^{n+1} - 1,
\end{equation}
which has combined with the LOCC results in~\cite{Piveteau2022circuit}. The `$\geq$' direction can be proven using the feasible solution provided previously for the dual problem of $\widehat{W}(\cN)$, which also satisfies the constraints for $\Gamma^{2\rightarrow 2}(\cN)$. Hence, for CNOT, $R^{2\rightarrow 2}_{\max}(\rm{CNOT}) \geq 1$ and we have,
\begin{equation}
    \gamma_{\PPT}(\operatorname{CNOT}^{\ox n}) \geq 2^{n+1} - 1.
\end{equation}

The proof for the regularized result is based on Lemma~\ref{app_lem:k_copy_ppt_lower_bound_for_channel}. Suppose now an infinite amount of PPT entanglement was supplied by the factory. We can then derive the following theorem,
\begin{theorem}\label{app_them:regularized_max_rains_bound}
    For any fixed bipartite quantum channel $\cN_{AB\rightarrow A'B'}$, the PPT-assisted regularized $\gamma$-factor is exponentially lower bounded by its bidirectional max-Rains information, i.e.,
    \begin{equation}
        \gamma^{\infty}_{\PPT}(\cN) \geq 2^{R_{\max}^{2\rightarrow 2}(\cN)}.
    \end{equation}
\end{theorem}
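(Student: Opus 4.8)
The plan is to derive the regularized statement directly from the $n$-shot bound already established in Lemma~\ref{app_lem:k_copy_ppt_lower_bound_for_channel}, which packages the substantive work: the SDP-duality argument relating $\gamma_{\PPT}$ to $\widehat{W}(\cN)$, the relaxation to $\Gamma^{2\to2}(\cN)$, and — crucially — the additivity $R^{2\to2}_{\max}(\cN^{\otimes n}) = n\,R^{2\to2}_{\max}(\cN)$ from~\cite{Wang2018semidefinite}. That lemma gives, for every $n \geq 1$,
\[
\gamma_{\PPT}(\cN^{\otimes n}) \geq 2^{n R^{2\to2}_{\max}(\cN) + 1} - 1.
\]

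First I would recall that by definition of the effective $\gamma$-factor, $\gamma^{(n)}_{\PPT}(\cN) = \bigl(\gamma_{\PPT}(\cN^{\otimes n})\bigr)^{1/n}$, so taking the $n$-th root of the above inequality yields $\gamma^{(n)}_{\PPT}(\cN) \geq \bigl(2^{n R^{2\to2}_{\max}(\cN) + 1} - 1\bigr)^{1/n}$. Before passing to the limit I would observe that the sequence $\log\gamma_{\PPT}(\cN^{\otimes n})$ is subadditive: this follows from the submultiplicativity $\gamma_{\PPT}(\cN_1\otimes\cN_2)\leq \gamma_{\PPT}(\cN_1)\,\gamma_{\PPT}(\cN_2)$, obtained by tensoring optimal PPT quasiprobability decompositions, so by Fekete's lemma the limit $\gamma^{\infty}_{\PPT}(\cN) = \lim_{n\to\infty}\gamma^{(n)}_{\PPT}(\cN)$ exists (and equals the infimum over $n$). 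Then I would take logarithms and let $n\to\infty$:
\[
\log\gamma^{\infty}_{\PPT}(\cN) = \lim_{n\to\infty}\frac{1}{n}\log\gamma_{\PPT}(\cN^{\otimes n}) \geq \lim_{n\to\infty}\frac{1}{n}\log\bigl(2^{n R^{2\to2}_{\max}(\cN)+1}-1\bigr) = R^{2\to2}_{\max}(\cN),
\]
where the final equality uses $\tfrac1n\log(2^{nR+1}-1) = R + \tfrac1n + \tfrac1n\log\bigl(1-2^{-nR-1}\bigr) \to R$, or simply the cruder bound $2^{nR+1}-1 \geq 2^{nR}$ valid for all $n$. Exponentiating gives $\gamma^{\infty}_{\PPT}(\cN)\geq 2^{R^{2\to2}_{\max}(\cN)}$, as claimed.

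The genuine content of the argument lives not in this appendix but in Lemma~\ref{app_lem:k_copy_ppt_lower_bound_for_channel}: the linear-in-$n$ growth of the exponent $n R^{2\to2}_{\max}(\cN)$ rests entirely on the additivity of the bidirectional max-Rains information under tensor products — without it one would obtain only a sublinear bound and the regularization could degrade. The only mild care needed at the present step is justifying that the regularized limit exists and that the finite-$n$ lower bounds persist in the limit; both are routine once subadditivity of $\log\gamma_{\PPT}(\cN^{\otimes n})$ is in hand, and the remainder is elementary asymptotics of $2^{nR+1}-1$.
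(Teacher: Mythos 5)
Your proposal is correct and follows essentially the same route as the paper's own proof: both invoke Lemma~\ref{app_lem:k_copy_ppt_lower_bound_for_channel} (whose content rests on the additivity of $R^{2\to2}_{\max}$), take $n$-th roots, and pass to the limit using the elementary asymptotics of $2^{nR+1}-1$. Your added justification of the existence of the regularized limit via submultiplicativity of $\gamma_{\PPT}$ under tensoring of QPDs and Fekete's lemma is a small but welcome tightening that the paper leaves implicit.
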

\begin{proof}
We prove the theorem starting with the lower bound of the $n$-copy PPT-assisted $\gamma$-factor. Given the situation of $n$-parallel cutting of a bipartite channel $\cN_{AB\rightarrow A'B'}$, the monotonicity of $ \sqrt[n]{\ \cdot \ }$ leads to,
\begin{equation}
    (\gamma_{\PPT}(\cN^{\otimes n}))^{1/n} \geq (2\Gamma^{2\rightarrow 2}(\cN)^n - 1)^{1/n},
\end{equation}
for any positive integer $n$. Therefore, taking the limit with respect to $n$ would preserve the inequality as,
\begin{equation}
\begin{aligned}
    \gamma^{\infty}_{\PPT}(\cN) \geq \lim_{n\rightarrow \infty}\left(2\Gamma^{2\rightarrow 2}(\cN)^n - 1\right)^{1/n}.
\end{aligned}
\end{equation}
Denote $L = \lim_{n\rightarrow \infty}\left(2\Gamma^{2\rightarrow 2}(\cN)^n - 1\right)^{1/n}$ and by the continuity of logarithm, we could interchange the order of limit and logarithm and hence derive the following by L'Hopital's rule,
\begin{equation}
\begin{aligned}
    \log L &= \lim_{k\rightarrow \infty}\frac{1}{k}\log\left(2\Gamma^{2\rightarrow 2}(\cN)^k - 1\right) =  R^{2\rightarrow 2}_{\max}(\cN)\\ &\Rightarrow \gamma^{\infty}_{\PPT}(\cN) \geq L = 2^{R^{2\rightarrow 2}_{\max}(\cN)}.
\end{aligned}
\end{equation}
\end{proof}

Notice that, since $\gamma^{\infty}_{\LOCC}(\cN) \geq \gamma^{\infty}_{\PPT}(\cN)$ as $\PPT$ is a relaxation of $\LOCC$ constraints. Besides, from previous literature~\cite{Bauml2018}, we have $R^{2\rightarrow 2}_{\max}(\cN) \geq Q_{\PPT}^{2\rightarrow 2}(\cN) \geq Q^{2\rightarrow 2}_{\LOCC}(\cN)$. Therefore, we have demonstrated the chain of inequalities as
\begin{equation}
    \gamma^{\infty}_{\LOCC}(\cN) \geq \gamma^{\infty}_{\PPT}(\cN) \geq 2^{Q^{2\rightarrow 2}_{\LOCC}(\cN)}.
\end{equation}

\section{Lower bound for smoothed regularized LOCC sampling cost via entanglement cost of Choi state}\label{appendix:lb_for_smoothed_regularized_locc_via_choi_state}
Apart from the exact scenario of circuit knitting, with the spirit of quantum reversed Shannon theory, by treating the $\gamma$-factor as a resource measure, one can define the vanishing-error version of the regularized $\gamma$-factor in the asymptotic regime. For any target bipartite channel $\cN_{AB\rightarrow A'B'}$, given $\cF$ a convex free operation set regarding a specific resource theory, the decomposition protocol reduces two terms as $\{(c_1, \cM_1),(c_2, \cM_2)\}$. In the practical situation, the realization of any quantum operation may yield unavoidable systematic error~\cite{Endo2018practical}. By allowing an $\epsilon$ error tolerance, we define the \textit{smoothed}  $\gamma$-factor of $\cN$ via the assistance of operations in $\cF$ as,
\begin{equation}
\begin{aligned}
    \gamma_{\cF}^{\epsilon}(\cN)&\coloneqq \min \{c_1 + c_2 \; \cM = c_1\cM_1 - c_2 \cM_2, \ c_{1,2} \geq 0, \ \cM_{1,2}\in \cF, \cM \in B_{\epsilon}(\cN) \},
\end{aligned}
\end{equation}
where the $\epsilon$-Ball of $\cN$ regarding the diamond norm, i.e.,
\begin{equation}
    B_{\epsilon}(\cN) = \{\cM \in \operatorname{CPTP}(AB\rightarrow A'B') \mid \|\cN - \cM\|_{\diamond} \leq \epsilon\}.
\end{equation}
Once taking $\epsilon \rightarrow 0$, the above definition becomes the usual one-shot definition of exact $\gamma$-factor regarding the free operation set $\cF$, i.e., $\gamma_{\cF}(\cN) = \lim_{\epsilon \rightarrow 0} \gamma_{\cF}^{\epsilon}(\cN)$. The approximate sampling cost is also generally discussed in~\cite{Piveteau2022quasiprobability}. We can further define the smoothed regularized $\gamma$-factor with respect to $\cF$ in the following,
\begin{definition}
    Given any (bipartite) quantum channel $\cN$ and a free operation set $\cF$, the smoothed regularised $\gamma$-factor of $\cN$ with respect to $\cF$ is defined as,
    \begin{equation}
        \Tilde{\gamma}^{\infty}_{\cF}(\cN) \coloneqq \lim_{\epsilon \rightarrow 0} \limsup_{n\rightarrow \infty}\left(\gamma^{\epsilon}_{\cF_n}(\cN^{\ox n})\right)^{1/n},
    \end{equation}
    where $\cF_n$ is the free operation set of $n$-shot.
\end{definition}
With the definition of smoothed regularized $\gamma$-factor of $\cN$, in particular, we let $\cF = \LOCC$ this time and derive the following proposition connecting both the LOCC-knitting sampling cost of any bipartite quantum channels $\cN_{AB\rightarrow A'B'}$ and the (parallel) entanglement cost of the channel's Choi state $J^{\cN}_{AA'BB"}$, 

\begin{proposition}\label{app_prop:regularized_locc_lb_choi_state}
    For any bipartite channel $\cN_{AB\rightarrow A'B'}$, the regularized LOCC-assisted $\gamma$-factor of $\cN$ is lower bounded by,
    \begin{equation}
        \Tilde{\gamma}^{\infty}_{\LOCC}(\cN) \geq 2^{E_C(\Tilde{J}^{\cN}_{AA'BB'})}.
    \end{equation}
\end{proposition}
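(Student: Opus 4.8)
The plan is to push the quasiprobability decomposition (QPD) down from the channel to its Choi state, and then invoke a Shannon-theoretic characterization of the entanglement cost of states. First I would fix $\epsilon>0$ and $n\ge 1$ and take a near-optimal $\epsilon$-approximate $n$-shot LOCC QPD of $\cN^{\otimes n}$, say $\cM = c_1\cM_1 - c_2\cM_2$ with $\cM_1,\cM_2$ LOCC between Alice's $n$ systems and Bob's $n$ systems, $\|\cN^{\otimes n} - \cM\|_{\diamond}\le\epsilon$, and $c_1+c_2$ arbitrarily close to $\gamma^{\epsilon}_{\LOCC_n}(\cN^{\otimes n})$. Feeding the maximally entangled state $\Phi_{\bar A^n A^n}(d_A^n)\otimes\Phi_{\bar B^n B^n}(d_B^n)$ into the input ports, the key observation is that an LOCC channel maps inputs that are product across the $A\!:\!B$ cut to states that are \emph{separable} across $(\bar A^n A'^n):(\bar B^n B'^n)$, so the operators $\sigma_i:=\widetilde J^{\cM_i}$ are separable states, while $\omega:=c_1\sigma_1-c_2\sigma_2$ is Hermitian, trace one, and obeys $\|(\widetilde J^{\cN})^{\otimes n}-\omega\|_1\le\|\cN^{\otimes n}-\cM\|_{\diamond}\le\epsilon$ because the diamond norm dominates the trace distance of Choi states. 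Hence $(\widetilde J^{\cN})^{\otimes n}$ admits an $\epsilon$-approximate decomposition into separable states of total weight $c_1+c_2$, i.e.\ $\gamma^{\epsilon}_{\SEP_n}\bigl((\widetilde J^{\cN})^{\otimes n}\bigr)\le c_1+c_2$.

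This reduces the claim to a statement purely about states: if $\rho^{\otimes n}$ is $\epsilon$-close to a signed combination $c_1\sigma_1-c_2\sigma_2$ of separable states, then $E_C(\rho)\le\tfrac1n\log_2(c_1+c_2)+o(1)$. I would prove this by the standard dilution argument: the inequality $\omega\le c_1\sigma_1$ bounds a smoothed max-relative entropy of entanglement of $\rho^{\otimes n}$ by $\log_2 c_1$, and feeding $\lceil\log_2 c_1\rceil$ ebits into the corresponding dilution channel outputs a state $O(\epsilon)$-close to $\rho^{\otimes n}$; dividing the ebit count by $n$ and sending $n\to\infty$ then $\epsilon\to0$ gives $E_C(\rho)\le\log_2\widetilde\gamma^{\infty}_{\SEP}(\rho)$. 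Applying this to $\rho=\widetilde J^{\cN}$, taking $n$-th roots in $\gamma^{\epsilon}_{\SEP_n}((\widetilde J^{\cN})^{\otimes n})\le\gamma^{\epsilon}_{\LOCC_n}(\cN^{\otimes n})$, then $n\to\infty$, then $\epsilon\to0$ (using continuity of $x\mapsto x^{1/n}$ and of $\log$), yields $\widetilde\gamma^{\infty}_{\LOCC}(\cN)\ge 2^{E_C(\widetilde J^{\cN})}$.

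The main obstacle is this state-level dilution step: one must (i) ensure the dilution channel lies in, or is asymptotically approximated within, the operation class that defines $E_C$ — the natural construction is only separability-preserving, so some care is needed to bring it down to LOCC — and (ii) control how the one-shot smoothing error survives regularization, which is exactly the territory of asymptotic-equipartition / generalized-quantum-Stein's-lemma arguments for the entanglement cost of states. By contrast, the first paragraph is essentially bookkeeping (LOCC channels have separable Choi states, and the diamond norm upper-bounds the trace distance of Choi states), and the final limit interchange is routine. If one is content with a weaker, fully self-contained bound, the dilution step can be skipped and replaced by the elementary remark that the separable decomposition directly controls the global robustness of entanglement of $\widetilde J^{\cN}$, losing only the precise identification of the exponent with $E_C$.
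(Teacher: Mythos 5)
Your proposal follows essentially the same route as the paper: reduce the channel to its Choi state by feeding maximally entangled states through the QPD (using that LOCC maps have separable Choi states and that the diamond norm dominates the trace distance of Choi states), lower-bound the resulting state-level overhead by the smoothed max-relative entropy of entanglement with respect to separable states, and identify the regularized limit with the entanglement cost. The only difference is presentational: the dilution step you flag as the main obstacle is exactly what the paper outsources to the cited asymptotic results of Datta ($\lim_{\epsilon\to 0}\limsup_{n} \tfrac{1}{n}D^{\epsilon}_{\max}(\rho^{\otimes n}\,\|\,\SEP) = E_{R}^{\infty}(\rho)$) and Brand\~{a}o--Plenio ($E_{R}^{\infty}=E_C$ under asymptotically non-entangling operations), so your concern that the dilution protocol is only separability-preserving rather than LOCC applies equally to the paper's own argument and is not a defect specific to your write-up.
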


Before we talk about the proof of the proposition, we first investigate the situation for bipartite quantum states. Recalling the QPD of any bipartite quantum states regarding some convex free set $\cF$, considering the situation of virtually approximating a target state $\rho\in\cD(\cH_{AB})$. Suppose an error $\epsilon$-tolerance of the approximation for some fixed  $\epsilon>0$, we define the $\epsilon$-Ball of $\rho$ as,
\begin{equation}
    B_{\epsilon}(\rho) = \{\tau \in \cD(\cH_{AB})\mid d(\tau, \rho) \leq \epsilon\}.
\end{equation}
with respect to some distance metric $d(\cdot, \cdot):\cD(\cH_d)\times \cD(\cH_d) \rightarrow \RR$. In the following, we will count \textit{trace-norm} induced metric for the general discussion. We say $\Omega = \{(a_j, \sigma_j)\}_j$ forms an $\epsilon$-error QPD of $\rho$ regarding set $\cF$ if,
\begin{equation}
    \sigma = \sum_j a_j \sigma_j \in B_{\epsilon}(\rho); \quad a_j \in \RR; \quad \sigma_j \in \cF \ \forall j.
\end{equation}
If now we separate $\Omega$ into the positive and negative components and derive, 
\begin{equation}
    \sigma = \sum_{a_j \geq 0} a_j \sigma_j - \sum_{a_j < 0} |a_j| \sigma_j.
\end{equation}
Denoting $\kappa^{\pm} = \sum_{a_j \geq,< 0} |a_j|$, we have,
\begin{equation}
    \sigma = \kappa^+\sum_{a_j \geq 0} \frac{a_j}{\kappa^+}\sigma_j - \kappa^-\sum_{a_j < 0} \frac{|a_j|}{\kappa^-} \sigma_j.
\end{equation}
Particularly, since $\cF$ is now convex, we can define $\sigma_{1,2} = \sum_{a_j\geq, < 0} (|a_j|/\kappa^{\pm}) \sigma_j$, and $\sigma_{1,2}\in\cF$. Therefore, the decomposition protocol reduces two terms as $\{(c_1, \sigma_1),(c_2, \sigma_2)\}$. A similar discussion can also be found in Ref.~\cite{Harrow2024optimal}. To determine the minimum sampling cost, we run the following optimization program to range over all possible $\epsilon$-error decompositions as,
\begin{equation}
\begin{aligned}
    \gamma_{\cF}^{\epsilon}(\rho)&\coloneqq \min \{ c_1 + c_2 \;\sigma = c_1\sigma_1 - c_2 \sigma_2, \ c_{1,2} \geq 0, \ \sigma_{1,2}\in \cF, \sigma \in B_{\epsilon}(\rho) \}
\end{aligned}
\end{equation}
Once taking $\epsilon \rightarrow 0$, the above definition becomes the usual one-shot definition of $\gamma$-factor regarding the free set $\cF$ defined in~\cite{Piveteau2022quasiprobability}, i.e., $\gamma_{\cF}(\rho) = \lim_{\epsilon \rightarrow 0} \gamma_{\cF}^{\epsilon}(\rho)$. With this in hand, one can define the $\gamma$-factor of QPD obtaining zero error in the asymptotic regime,
\begin{definition}
    Given any quantum state $\rho$ and a free set $\cF$, the smoothed regularised $\gamma$-factor of $\rho$ with respect to $\cF$ is defined as,
    \begin{equation}
        \Tilde{\gamma}^{\infty}_{\cF}(\rho) \coloneqq \lim_{\epsilon \rightarrow 0} \limsup_{n\rightarrow \infty}\left(\gamma^{\epsilon}_{\cF_n}(\rho^{\ox n})\right)^{1/n},
    \end{equation}
    where $\cF_n$ is the free set of $n$-shot.
\end{definition}

\begin{lemma}\label{lem:smoothed_regularised_gamma_vs_regularised_gamma}
    For any quantum state $\rho$ and a free set $\cF$, $\Tilde{\gamma}^{\infty}_{\cF}(\rho) \leq \gamma^{\infty}_{\cF}(\rho)$.
\end{lemma}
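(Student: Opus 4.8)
The plan is to prove the inequality $\Tilde{\gamma}^{\infty}_{\cF}(\rho) \leq \gamma^{\infty}_{\cF}(\rho)$ by exhibiting, for each $n$, a feasible $\epsilon$-error QPD of $\rho^{\ox n}$ whose cost is at most that of the \emph{exact} one-shot QPD of $\rho^{\ox n}$. Concretely, I would start from the observation that every exact QPD of $\rho^{\ox n}$ of total weight $c_1+c_2$ is in particular an $\epsilon$-error QPD of $\rho^{\ox n}$ for every $\epsilon\geq 0$, since the exact target lies trivially in $B_\epsilon(\rho^{\ox n})$ (the distance is zero $\leq\epsilon$). Hence for every $n$ and every $\epsilon>0$,
\begin{equation}
    \gamma^{\epsilon}_{\cF_n}(\rho^{\ox n}) \leq \gamma_{\cF_n}(\rho^{\ox n}),
\end{equation}
where the right-hand side is the exact one-shot $\gamma$-factor of the $n$-fold tensor power. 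This is just the statement that enlarging the feasible set (from the exact target to the whole $\epsilon$-ball) can only decrease the minimum.

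Next I would take $n$-th roots on both sides, which preserves the inequality since $x\mapsto x^{1/n}$ is monotone on the nonnegative reals, giving $\bigl(\gamma^{\epsilon}_{\cF_n}(\rho^{\ox n})\bigr)^{1/n} \leq \bigl(\gamma_{\cF_n}(\rho^{\ox n})\bigr)^{1/n}$ for all $n$. Passing to $\limsup_{n\to\infty}$ on both sides keeps the inequality (the $\limsup$ of a pointwise-dominated sequence is dominated), so
\begin{equation}
    \limsup_{n\rightarrow\infty}\bigl(\gamma^{\epsilon}_{\cF_n}(\rho^{\ox n})\bigr)^{1/n} \leq \limsup_{n\rightarrow\infty}\bigl(\gamma_{\cF_n}(\rho^{\ox n})\bigr)^{1/n} = \gamma^{\infty}_{\cF}(\rho),
\end{equation}
where the final equality is just the definition of the regularized (exact) $\gamma$-factor. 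The left-hand side is independent of whether we have yet taken $\epsilon\to 0$, so finally taking $\lim_{\epsilon\to 0}$ of the left-hand side — which exists and equals its value for small $\epsilon$, being a bounded monotone-in-$\epsilon$ quantity, or can simply be bounded termwise — yields $\Tilde{\gamma}^{\infty}_{\cF}(\rho) \leq \gamma^{\infty}_{\cF}(\rho)$, as claimed.

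The argument is essentially a monotonicity-of-infima-over-nested-feasible-sets argument, so there is no real obstacle; the only point requiring a word of care is the interchange of the outer $\lim_{\epsilon\to 0}$ with the already-established bound. Since the bound $\limsup_{n}(\gamma^{\epsilon}_{\cF_n}(\rho^{\ox n}))^{1/n}\leq \gamma^{\infty}_{\cF}(\rho)$ holds for \emph{every} fixed $\epsilon>0$ with a right-hand side not depending on $\epsilon$, taking $\lim_{\epsilon\to 0}$ is immediate. One should also note $\gamma^{\epsilon}_{\cF_n}$ is monotone nonincreasing in $\epsilon$ (larger balls, smaller minima), so the limit as $\epsilon\to 0$ is in fact a supremum and well-defined; this justifies writing $\Tilde{\gamma}^{\infty}_{\cF}(\rho)$ as a genuine limit rather than merely a $\liminf$. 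No further subtlety arises, and the same proof verbatim establishes the channel version, Proposition~\ref{app_prop:regularized_locc_lb_choi_state}'s analogue, since the only ingredients used are convexity of $\cF$ (already assumed) and that the exact target lies in every diamond-norm $\epsilon$-ball around itself.
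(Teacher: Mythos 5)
Your proof is correct and is essentially identical to the paper's: both observe that any exact QPD of $\rho^{\ox n}$ is a feasible point for the $\epsilon$-error problem, giving $\gamma^{\epsilon}_{\cF_n}(\rho^{\ox n})\leq\gamma_{\cF_n}(\rho^{\ox n})$, and then pass to $n$-th roots, $\limsup$, and the $\epsilon\to 0$ limit. Your added remarks on monotonicity in $\epsilon$ are a slight refinement of the paper's argument but do not change the route.
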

\begin{proof}
    The proof follows by taking the optimal solution of $\gamma_{\cF_n}(\rho^{\ox n})$ for arbitrary $n\in \ZZ_+$, solution automatically forms the feasible solution for $\epsilon$-error $\gamma^{\epsilon}_{\cF_n}(\rho^{\ox n})$ for $\epsilon >0$. Therefore, by the continuity of $(\ \cdot\ )^{1/n}$ and taking the limitation, we prove,
    \begin{equation}
        \Tilde{\gamma}^{\infty}_{\cF}(\rho) =\lim_{\epsilon \rightarrow 0} \limsup_{n\rightarrow \infty}\left(\gamma^{\epsilon}_{\cF_n}(\rho^{\ox n})\right)^{1/n} \leq \limsup_{n\rightarrow \infty}\left(\gamma_{\cF_n}(\rho^{\ox n})\right)^{1/n} = \gamma^{\infty}_{\cF}(\rho).
    \end{equation}
\end{proof}

\begin{lemma}\label{lem:lb_by_smooth_rel_entropy}
Given $\rho \in \cD(\cH_{AB})$ and $\cF\subseteq\cD(\cH_{AB})$ is a convex free set, then the corresponding virtual sampling $\gamma$-factor is lower bounded,
\begin{equation}
    \gamma_{\cF}^{\epsilon}(\rho) \geq 2^{D_{\max}^{\epsilon}(\rho \| \cF)}
\end{equation}
\end{lemma}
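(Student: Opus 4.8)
The plan is to extract from an optimal smoothed quasi-probability decomposition a single free state that operator-dominates a state close to $\rho$, and then read off the bound on the smoothed max-relative entropy. First I would take a decomposition $\sigma = c_1\sigma_1 - c_2\sigma_2$ attaining the minimum in the definition of $\gamma_{\cF}^{\epsilon}(\rho)$, so that $c_1 + c_2 = \gamma_{\cF}^{\epsilon}(\rho)$ with $c_{1,2}\ge 0$, $\sigma_{1,2}\in\cF$ convex, and $\sigma\in B_{\epsilon}(\rho)$ in trace norm.

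The first key step is a normalization observation: taking the trace of $\sigma = c_1\sigma_1 - c_2\sigma_2$ and using that $\sigma,\sigma_1,\sigma_2$ are all density operators gives $1 = c_1 - c_2$, hence $1\le c_1 = 1 + c_2 \le c_1 + c_2 = \gamma_{\cF}^{\epsilon}(\rho)$. The second key step is an operator-inequality observation: since $c_2\sigma_2\succeq 0$, we have $\sigma = c_1\sigma_1 - c_2\sigma_2 \preceq c_1\sigma_1$, and by the very definition of the max-relative entropy this yields $D_{\max}(\sigma\|\sigma_1)\le \log c_1$; as $\sigma_1\in\cF$, we get $D_{\max}(\sigma\|\cF)\le \log c_1$.

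Finally, because $\sigma$ lies in the trace-norm $\epsilon$-ball of $\rho$, the definition of the smoothed quantity gives $D_{\max}^{\epsilon}(\rho\|\cF)\le D_{\max}(\sigma\|\cF)\le \log c_1 \le \log\gamma_{\cF}^{\epsilon}(\rho)$, which rearranges to $\gamma_{\cF}^{\epsilon}(\rho)\ge 2^{D_{\max}^{\epsilon}(\rho\|\cF)}$, as claimed.

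I do not expect a serious obstacle: the argument is the state-level analogue of the standard robustness-versus-max-relative-entropy bound. The only points requiring mild care are (i) confirming that the minimum defining $\gamma_{\cF}^{\epsilon}(\rho)$ is attained (closedness of $\cF$ and of $B_{\epsilon}(\rho)$, plus a normalization/compactness argument bounding $c_{1,2}$); if attainment ever fails one simply runs the identical estimate along an infimizing sequence and passes to the limit using continuity of $\log$; and (ii) ensuring the smoothing ball used in $\gamma_{\cF}^{\epsilon}$ coincides with the one in $D_{\max}^{\epsilon}$, which it does since both are the trace-norm ball $B_{\epsilon}$ fixed earlier in this appendix. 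This state-level lemma is exactly what I would then lift, via its Choi-state/channel version, to establish Proposition~\ref{app_prop:regularized_locc_lb_choi_state}.
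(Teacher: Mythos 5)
Your proposal is correct and takes essentially the same route as the paper's proof: trace normalization forces $c_1 = c_2 + 1$, positivity of $c_2\sigma_2$ gives $\sigma \preceq c_1\sigma_1$ and hence $D_{\max}^{\epsilon}(\rho\|\cF) \le \log c_1$, from which the bound follows. The paper states the intermediate result as $\gamma_{\cF}^{\epsilon}(\rho) \ge 2\cdot 2^{D_{\max}^{\epsilon}(\rho\|\cF)} - 1$ and then weakens it using $D_{\max}\ge 0$, which is the same observation as your step $c_1 + c_2 = 2c_1 - 1 \ge c_1$.
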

\begin{proof}
    Recalling the definition of \textit{smoothed max-relative entropy} with respect to $\cF$. For density operators $\rho\in\cD(\cH_{AB})$, $\sigma\in \cF$, and a real number $\epsilon >0$, the $\epsilon$-\textit{smoothed relative max-entropy} of $\rho$ w.r.t $\cF$ is defined as,
\begin{equation}
\begin{aligned}         
    D_{\max}^{\epsilon}(\rho\|\cF) &= \min_{\tau\in B_{\epsilon}(\rho)} D_{\max}(\tau\|\cF)\\
    &= \min_{\tau\in B_{\epsilon}(\rho)} \inf_{\sigma\in\cF}\{\lambda\in \RR:\tau\leq 2^{\lambda} \sigma\} \\
    &= \min_{\tau\in B_{\epsilon}(\rho)} \inf_{\xi\in\cD(\cH_{AB})}\{\lambda\in \RR:\tau + (2^{\lambda} - 1)\xi = 2^{\lambda}\sigma, \sigma \in \cF \}
\end{aligned}
\end{equation}
By writing $c = 2^\lambda$ for some  $\lambda\in\RR$, we could then derive,
\begin{equation}
\begin{aligned}
    \gamma_{\cF}^{\epsilon}(\rho)&=\inf_{\lambda \in\RR} \{2\cdot 2^{\lambda}-1, \;\sigma + (2^{\lambda}-1)\sigma_2 = 2^{\lambda}\sigma_1, \sigma_{1,2} \in \cF, \sigma \in B_{\epsilon}(\rho)\}\\
    &\geq \min_{\sigma\in B_{\epsilon}(\rho)} \inf_{\sigma_{1} \in \cD(\cH_{AB})} \{2\cdot 2^{\lambda}-1,\;\lambda \in\RR, \sigma + (2^{\lambda}-1)\sigma_2 = 2^{\lambda}\sigma_1, \sigma_{2} \in \cF\}\\
    &\geq \min_{\sigma \in B_{\epsilon}(\rho)} 2\cdot 2^{D_{\max}(\sigma\|\cF)} - 1= 2\cdot 2^{D^{\epsilon}_{\max}(\rho\|\cF)} - 1.
\end{aligned}
\end{equation}
Now since $D_{\max}(\rho\|\sigma) \geq 0$ for any $\rho$ and $\sigma$, we have $2\cdot 2^{D_{\max}^{\epsilon}(\rho\|\cF)} - 1 \geq 2^{D_{\max}^{\epsilon}(\rho\|\cF)}$.
\end{proof}

\begin{proposition}\label{prop:regularised_EC_lb_states}
    Given a quantum state $\rho\in\cD(\cH_{AB})$, the smoothed regularized LOCC $\gamma$-factor of $\rho$ is lower bounded by,
    \begin{equation}
        \Tilde{\gamma}^{\infty}_{\LOCC}(\rho) \geq 2^{E_C(\rho)}.
    \end{equation}
\end{proposition}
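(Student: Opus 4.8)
The plan is to feed the one-shot bound of Lemma~\ref{lem:lb_by_smooth_rel_entropy} into the asymptotic theory of entanglement manipulation. For a bipartite \emph{state}, the free set entering its LOCC-assisted QPD is the set $\SEP(A\!:\!B)$ of separable states---the states preparable for free under LOCC---so Lemma~\ref{lem:lb_by_smooth_rel_entropy} with $\cF=\SEP$ gives, for every $\epsilon\geq 0$ and every $n\geq 1$,
\[
  \gamma^{\epsilon}_{\LOCC_n}\!\big(\rho^{\ox n}\big)\;\geq\;2^{\,D_{\max}^{\epsilon}\left(\rho^{\ox n}\,\big\|\,\SEP(A^n\!:\!B^n)\right)},
\]
where $\SEP(A^n\!:\!B^n)$ denotes the separable states across the $A^n|B^n$ cut. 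First I would take $n$-th roots, pass to $\limsup_{n\to\infty}$, and finally let $\epsilon\to 0$; since $x\mapsto x^{1/n}$ and $x\mapsto 2^{x}$ are increasing and continuous, the inequality is preserved and gives
\[
  \Tilde{\gamma}^{\infty}_{\LOCC}(\rho)\;\geq\;2^{\,\lim_{\epsilon\to 0}\,\limsup_{n\to\infty}\,\frac{1}{n} D_{\max}^{\epsilon}\left(\rho^{\ox n}\,\big\|\,\SEP(A^n\!:\!B^n)\right)}.
\]

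The substantive step is to lower bound the exponent by $E_C(\rho)$. Here I would invoke the asymptotic characterisation of the entanglement cost: the regularised, $\epsilon$-smoothed max-relative entropy of entanglement appearing above is precisely the quantity governing the optimal rate of maximally entangled states consumed to asymptotically prepare $\rho$, so---using the operational definition $E_C(\rho)=\lim_{\epsilon\to 0}\limsup_{n}\frac{1}{n}E^{(1),\epsilon}_{C,0}(\rho^{\ox n})$, the additivity $E_C(\rho^{\ox n})=nE_C(\rho)$, and the asymptotic-continuity/Stein-type results for the convex set of separable states---one would conclude $\lim_{\epsilon\to 0}\limsup_{n\to\infty}\frac{1}{n}D_{\max}^{\epsilon}(\rho^{\ox n}\|\SEP(A^n\!:\!B^n))\geq E_C(\rho)$, which finishes the proof. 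An alternative, more operational route would mimic the construction in the proof of Lemma~\ref{app_lem:gammaSEP_1shot}: from an $\epsilon$-error QPD $\omega=c_1\sigma_1-c_2\sigma_2$ of $\rho^{\ox n}$ with $\sigma_{1,2}$ separable one has $\omega\leq c_1\sigma_1$, which one turns into a protocol preparing a state $\epsilon$-close to $\rho^{\ox n}$ out of a maximally entangled state of $\log\lceil c_1\rceil\leq\log\gamma^{\epsilon}_{\LOCC_n}(\rho^{\ox n})+O(1)$ ebits; regularising in $n$ and de-smoothing in $\epsilon$ then yields $E_C(\rho)\leq\log\Tilde{\gamma}^{\infty}_{\LOCC}(\rho)$.

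I expect the main obstacle to be exactly this passage from the information-theoretic lower bound to the operational cost $E_C$: one must show that the smoothed, regularised max-relative entropy of entanglement (equivalently, the robustness-type one-shot cost extracted from the QPD) genuinely controls the entanglement cost under LOCC rather than merely some relaxation of it, and it is here that the asymptotic equipartition property and the Stein-type lemma for the convex set of separable states have to be brought in. The first step---Lemma~\ref{lem:lb_by_smooth_rel_entropy} together with the interchange of limits---is essentially bookkeeping.
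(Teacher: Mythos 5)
Your proposal follows essentially the same route as the paper: apply Lemma~\ref{lem:lb_by_smooth_rel_entropy} with $\cF=\SEP(A\!:\!B)$, regularize and interchange the limits, and then identify $\lim_{\epsilon\to 0}\limsup_n \frac{1}{n}D_{\max}^{\epsilon}(\rho^{\ox n}\|\SEP_n)$ with the entanglement cost via the generalized Stein's lemma machinery (the paper cites Datta's result that this regularized smoothed quantity equals $E_R^{\infty}(\rho)$ and then the Brand\~ao--Plenio reversibility theorem giving $E_R^{\infty}(\rho)=E_C(\rho)$ for asymptotically non-entangling operations). The only point to keep straight is that the inequality $\lim_{\epsilon\to0}\limsup_n\frac1n D_{\max}^{\epsilon}(\rho^{\ox n}\|\SEP_n)\geq E_C(\rho)$ holds (as an equality) only when $E_C$ is the cost under SEP-preserving/non-entangling operations, not the LOCC entanglement cost $E_F^{\infty}$, which in general upper-bounds $E_R^{\infty}$ rather than lower-bounding it --- a caveat your write-up shares with the paper's own statement.
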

\begin{proof}
    Fixing an integer $n\geq 1$, from Lemma~\ref{lem:lb_by_smooth_rel_entropy}, taking the separable states as the free set, i.e., $\cF = \SEP(A:B)$, we have,
    \begin{equation}
        \gamma_{\SEP_n}^{\epsilon}(\rho^{\ox n}) \geq 2^{D^{\epsilon}_{\max}(\rho^{\ox n}\|\SEP_n)}.
    \end{equation}
    Then we take the inner limit for the regularisation, which should preserve the inequality as it holds for any integer $n\geq 1$,
    \begin{equation}
        \limsup_{n\rightarrow \infty} (\gamma^{\e}_{\SEP_n}(\rho^{\ox n}))^{1/n} \geq \limsup_{n\rightarrow \infty}
        2^{\frac{1}{n}\rm{D}^{\epsilon}_{\max}(\rho^{\ox n}\|\SEP_n)} = 2^{\cE_{\max}^{\epsilon}(\rho)},
    \end{equation}
    where $\cE_{\max}^{\epsilon}(\rho)$ regarding some free state set $\cF$ is defined as,
    \begin{equation}
        \cE_{\max}^{\epsilon}(\rho) := \limsup_{n\rightarrow \infty} \frac{1}{n} \min_{\sigma \in \cF_n} D_{\max}^{\epsilon}(\rho^{\ox n} \|\sigma) = \limsup_{n\rightarrow \infty} \frac{1}{n} D_{\max}^{\epsilon}(\rho^{\ox n} \|\cF_n).
    \end{equation}
    As a result, taking the main results from~\cite{Datta2009max}, we then derive,
    \begin{equation}
        \Tilde{\gamma}^{\infty}_{\SEP}(\rho) = \lim_{\epsilon \rightarrow 0}\left(\limsup_{n\rightarrow \infty} (\gamma^{\epsilon}_{\SEP_n}(\rho^{\ox n}))^{1/n}\right) \geq \lim_{\epsilon \rightarrow 0} 2^{\cE_{\max}^{\e}(\rho)} = 2^{E_{R}^{\infty}(\rho)}.
    \end{equation}
    Then, since SEP channels contain all LOCCs and can never generate entanglement between two parties. Therefore, the  corresponding $E_{R}^{\infty}(\rho) = E_C(\rho)$ under the assistance of these channels~\cite{Brandao2010reversible}, we can then derive,
    \begin{equation}
        \Tilde{\gamma}^{\infty}_{\LOCC}(\rho) \geq 2^{E_C(\rho)},
    \end{equation}
    where the inequality holds because of the general relation between SEP channels and LOCCs.
\end{proof}

\vspace{3mm}
\begin{proof}
\textbf{of proposition~\ref{app_prop:regularized_locc_lb_choi_state}:} For any bipartite channel $\cN$ it holds that $\gamma^{\epsilon}_{\LOCC}(\cN)\geq \gamma^{\epsilon}_{\LOCC}(\Tilde{J}^{\cN}_{AA'BB'})$ as for the similar reason stated  in Ref.~\cite{Harrow2024optimal,Piveteau2022circuit}. This is not surprising by taking optimal QPD from $\gamma^{\epsilon}_{\LOCC}(\cN)$, one can construct the feasible solution of $\gamma^{\epsilon}_{\LOCC}(\Tilde{J}^{\cN}_{AA'BB'})$ by applying the QPD to the state $\Phi_{AA'}\ox \Phi_{BB'}$ and produce the state of $\epsilon$-faithfulness due to the diamond norm constraint. As a result, the inequality should hold for the $n$-copy scenario of $\cN$, which then leads to the proposition,
    \begin{equation}
        \Tilde{\gamma}^{\infty}_{\LOCC}(\cN) \geq 
        \Tilde{\gamma}^{\infty}_{\LOCC}(\Tilde{J}^{\cN}_{AA'BB'}).
    \end{equation}
\end{proof}

\begin{corollary}
    For any bipartite channel $\cN_{AB\rightarrow A'B'}$ that is bicovariant, then $\Tilde{\gamma}^{\infty}_{\LOCC}(\cN)$ is exponentially lower bounded by the (sequential) entanglement cost of $\cN$.
\end{corollary}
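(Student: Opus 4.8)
The starting point is Proposition~\ref{app_prop:regularized_locc_lb_choi_state}, which already gives
\begin{equation}
    \Tilde{\gamma}^{\infty}_{\LOCC}(\cN) \;\geq\; \Tilde{\gamma}^{\infty}_{\LOCC}(\Tilde{J}^{\cN}_{AA'BB'}) \;\geq\; 2^{E_C(\Tilde{J}^{\cN}_{AA'BB'})},
\end{equation}
where the last step is Proposition~\ref{prop:regularised_EC_lb_states}. So the only thing left to establish is that, when $\cN$ is bicovariant, the entanglement cost of the Choi state dominates the sequential (adaptive) entanglement cost of the channel, i.e. $E_C(\Tilde{J}^{\cN}_{AA'BB'}) \geq E_{C}^{\mathrm{seq}}(\cN)$; the corollary then follows immediately by monotonicity of $t\mapsto 2^{t}$.

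The plan is to exploit the teleportation-simulability of bicovariant channels. First I would recall from Refs.~\cite{Bauml2019resource,Bauml2018} that a bicovariant $\cN_{AB\rightarrow A'B'}$ can be simulated by LOCC using a single copy of its Choi state as a resource: there is an LOCC channel $\Theta$ such that $\cN_{AB\rightarrow A'B'}(\rho_{AB})=\Theta\big(\rho_{AB}\otimes \Tilde{J}^{\cN}\big)$ for every input $\rho_{AB}$, where $\Theta$ is assembled from generalized teleportation runs over the covariance groups on the two sides followed by local corrections. Since each invocation of $\cN$ is then replaced independently by $\Theta$ acting on a fresh Choi state, this simulation works verbatim inside any adaptive protocol built out of $\cN$, not merely for the parallel channel $\cN^{\otimes n}$.

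Next I would turn this into an asymptotic entanglement-cost bound. By the definition of $E_C(\Tilde{J}^{\cN})$, for every $\delta>0$ and all sufficiently large $n$ there is an LOCC protocol consuming $n\big(E_C(\Tilde{J}^{\cN})+\delta\big)$ ebits that outputs a state $\tau_n$ with $\|\tau_n-(\Tilde{J}^{\cN})^{\otimes n}\|_1 \to 0$. Composing it with $\Theta^{\otimes n}$ (or, for the sequential case, interleaving one $\Theta$-gadget per channel slot) and using contractivity of the trace norm under CPTP maps together with a telescoping/union-bound estimate on the accumulated error, one obtains an LOCC simulation of the $n$-fold (sequential) use of $\cN$ with diamond-norm error $o(1)$ at entanglement rate $E_C(\Tilde{J}^{\cN})+\delta$. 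Letting $n\to\infty$ and then $\delta\to 0$ yields $E_{C}^{\mathrm{seq}}(\cN)\leq E_C(\Tilde{J}^{\cN})$, which is what was needed. I would also note that the reverse inequality $E_C(\Tilde{J}^{\cN})\leq E_{C}^{\mathrm{seq}}(\cN)$ holds for arbitrary channels---feed the input $\Phi_{\Bar{A}A}\otimes\Phi_{\Bar{B}B}$ into any channel-simulation protocol---so in fact the two quantities coincide for bicovariant $\cN$.

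The step I expect to be the main obstacle is bookkeeping rather than conceptual: reconciling the smoothed/regularized conventions underlying Propositions~\ref{app_prop:regularized_locc_lb_choi_state} and~\ref{prop:regularised_EC_lb_states} (vanishing-error, asymptotic, $\SEP$-assisted state cost) with the diamond-norm, adaptive definition of $E_{C}^{\mathrm{seq}}(\cN)$, and in particular carefully propagating the $o(1)$ state-preparation error through $n$ parallel or sequential copies of the teleportation gadget so that it becomes an $o(1)$ diamond-norm error on the simulated channel. The use of the bicovariance hypothesis is confined to the teleportation-simulation identity, which is available off the shelf.
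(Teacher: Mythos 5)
Your proposal is correct and follows essentially the same route as the paper: both reduce the claim to Proposition~\ref{app_prop:regularized_locc_lb_choi_state} and then invoke the fact that for bicovariant channels the (sequential) entanglement cost is controlled by the entanglement cost of the Choi state. The paper simply cites this last fact from the bicovariance literature and states it as the equality $E_C(\Tilde{J}^{\cN}_{AA'BB'}) = E^{\circ}_C(\cN)$, whereas you spell out the underlying teleportation-simulation argument for the inequality actually needed; the content is the same.
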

\begin{proof}
    The proof of the corollary is rather straightforward by applying the properties of bicovariant channel~\cite{Wilde2018,Bauml2019resource,Bauml2018}. Based on Proposition~\ref{app_prop:regularized_locc_lb_choi_state}, we automatically have,
    \begin{equation}
        \Tilde{\gamma}^{\infty}_{\LOCC}(\cN) \geq 2^{E_C(\Tilde{J}^{\cN}_{AA'BB'})} = 2^{E^{\circ}_C(\cN)},
    \end{equation}
    where we denote $E_{C}^{\circ}(\cN)$ as the sequential entanglement cost of $\cN$. In general, the sequential setting covers the parallel setting and therefore,
    \begin{equation}
        E_{C}^{\circ}(\cN) \leq E_{C}(\cN).
    \end{equation}
\end{proof}

\section{Circuit knitting for multiple noisy quantum gates}\label{appendix:circ_knit_distinct_channel}
In a more practical setting of circuit knitting, one has to consider the situation with noise. For example, considering the quantum circuit containing $n \geq 1$ distinct bipartite quantum channels or the noisy quantum gates, which can be seen as a reasonable practical assumption on NISQ devices, denoted as $\cN_1, \cdots, \cN_n$. To apply the circuit knitting, firstly, each of these $\cN_j$'s is aligned to parallel via local SWAP gates as shown in Fig.~\ref{fig:parallel_cut}.

Since local unitary would not change the PPT-assisted $\gamma$-factor, we denote the total channel $\cN = \bigotimes_{j=1}^n \cN_j$ and directly apply the single-copy lower bound Eq.~\eqref{eq:LN_max_nshot} to have,
\begin{equation}
    \gamma_{\PPT}\left(\cN\right)\geq \left(\prod_{j=1}^n 2^{LN_{\max}(\cN_j)}\right) - 1,
\end{equation}
where the equality holds due to the additivity of max logarithmic negativity, i.e., $LN_{\max}(\cN\ox\cM) = LN_{\max}(\cN) + LN_{\max}(\cM)$. As a consequence, by applying the relation between LOCC- and PPT-assisted $\gamma$-factor as well as the relation between max logarithmic negativity, exact PPT-entanglement cost, and bidirectional channel capacities, we have,
\begin{equation}
    \gamma_{\LOCC}(\cN) \geq \left(\prod_{j=1}^n 2^{E_{\PPT, C, 0}^{(1)}(\cN_j)}\right) - 1\geq  \left(\prod_{j=1}^n 2^{Q^{2\rightarrow 2}_{\LOCC}(\cN_j)}\right) - 1.
\end{equation}

However, in the practical scenario, one has to be aware that cutting directly on the total circuit as determining the entire process of $\cN$ is typically intractable. Instead, for each of these $n$ channels, suppose the corresponding optimal LOCC-assisted QPD has been pre-determined, and denote the associated optimal sampling overheads by $\gamma_{\LOCC}(\cN_{j})$ for the $j$-th channel. One has to treat each channel $\cN_j$ independently and randomly replace them with one of its QPD channels. This leads to a total number of samples by at least $\gamma_{\rm tot}^2 = \prod_{j=1}^n \gamma_{\LOCC}(\cN_j)^2$. We can have the following corollary.
\begin{corollary}
    For a quantum circuit containing $n$ bipartite quantum channels $\cN_1, \cN_2, \cdots, \cN_n$, the practical sampling overhead is lower bounded by,
    \begin{equation}
        \gamma_{\rm tot}^2 \geq \prod_{j=1}^n \left(2^{ E^{(1)}_{\cF,C,0}(\cN_j)} - 1\right)^2.
    \end{equation}
    where $\cF \in \{\SEP, \PPT\}$.
\end{corollary}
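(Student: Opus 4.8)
The plan is to reduce the claim to two ingredients that are essentially already in place: the multiplicativity of the per-shot overhead under the independent single-cut strategy, and the one-shot bound of Lemma~\ref{app_lem:gammaSEP_1shot} (together with its $\PPT$ analogue) applied gate by gate. Throughout I take $\cF\in\{\SEP,\PPT\}$.

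First I would make precise why $\gamma_{\rm tot}^2=\prod_{j=1}^n\gamma_{\cF}(\cN_j)^2$ for the strategy being analyzed. After the local-SWAP realignment that places the gates in parallel — which leaves every $\gamma_{\cF}(\cN_j)$ unchanged, local unitaries being free in $\LOCC$, $\SEP$ and $\PPT$ alike — I would fix for each $j$ an optimal $\cF$-assisted QPD $\cN_j=\sum_i\alpha^{(j)}_i\cM^{(j)}_i$ with $\sum_i|\alpha^{(j)}_i|=\gamma_{\cF}(\cN_j)$ and each $\cM^{(j)}_i\in\cF$. Tensoring these decompositions gives $\bigotimes_j\cN_j=\sum_{\vec i}\big(\prod_j\alpha^{(j)}_{i_j}\big)\bigotimes_j\cM^{(j)}_{i_j}$, where each $\bigotimes_j\cM^{(j)}_{i_j}$ is again separable- (resp.\ PPT-) preserving since tensor products of such maps are of the same type, and whose coefficient $\ell_1$-norm is $\sum_{\vec i}\big|\prod_j\alpha^{(j)}_{i_j}\big|=\prod_j\sum_i|\alpha^{(j)}_i|=\prod_j\gamma_{\cF}(\cN_j)$. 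By the Hoeffding bound recalled in Section~\ref{sec:simulation_bipartite_qc}, sampling this product decomposition to additive error $\delta$ with confidence $1-\epsilon$ costs $\gamma_{\rm tot}^2=\prod_j\gamma_{\cF}(\cN_j)^2$ samples (up to the $\delta^{-2}\ln(2/\epsilon)$ prefactor). If the per-gate QPDs are instead drawn from a strictly smaller free class, $\LOCC\subseteq\SEP\subseteq\PPT$ only enlarges the left side, so in every case $\gamma_{\rm tot}^2\ge\prod_j\gamma_{\cF}(\cN_j)^2$.

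Next I would lower bound each factor $\gamma_{\cF}(\cN_j)$ by $2^{E^{(1)}_{\cF,C,0}(\cN_j)}-1$. For $\cF=\SEP$ this is exactly Lemma~\ref{app_lem:gammaSEP_1shot}. For $\cF=\PPT$ I would re-run the same construction as in the proof of that lemma: from an optimal PPT-QPD $\cN_j=(c_j+1)\cM_1-c_j\cM_2$ with $\cM_1,\cM_2$ PPT-preserving and $\gamma_{\PPT}(\cN_j)=2c_j+1$, build $\Lambda$ on $A\Bar{A}B\Bar{B}$ with $|\Bar{A}|=|\Bar{B}|=k:=2\lceil c_j\rceil$ by $\Lambda(\sigma)=\tr[\Phi_k\sigma_{\Bar{A}\Bar{B}}]\,\cN_j(\sigma_{AB})+\tr[(I-\Phi_k)\sigma_{\Bar{A}\Bar{B}}]\,\cM_2(\sigma_{AB})$. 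The two facts I would verify are that (i) $\tr[\Phi_k\tau]\le 1/k$ for every PPT state $\tau$ on $\Bar{A}\Bar{B}$ (the PPT singlet-fraction bound), and (ii) for $p\le 1/k$ one has $p\cN_j+(1-p)\cM_2=p(c_j+1)\cM_1+\big(1-p(c_j+1)\big)\cM_2$, a genuine convex combination of PPT-preserving channels since $p\le 1/k=1/(2\lceil c_j\rceil)\le 1/(c_j+1)$ when $c_j\ge 1$. These give that $\Lambda$ is PPT-preserving and that $\Lambda(\rho_{AB}\ox\Phi_k)=\cN_j(\rho_{AB})$, i.e.\ a one-shot PPT simulation of $\cN_j$ consuming $\log k$ ebits, so $E^{(1)}_{\PPT,C,0}(\cN_j)\le\log(2\lceil c_j\rceil)$ and hence $\gamma_{\PPT}(\cN_j)=2c_j+1\ge 2\lceil c_j\rceil-1\ge 2^{E^{(1)}_{\PPT,C,0}(\cN_j)}-1$ (the case $c_j=0$ being trivial as then $\cN_j$ is free and the bound reads $\gamma_{\PPT}(\cN_j)=1\ge 0$).

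Finally I would assemble the product. Since $E^{(1)}_{\cF,C,0}(\cN_j)$ is the logarithm of an integer dimension it is either $0$ (when $\cN_j$ is free) or at least $1$, so $2^{E^{(1)}_{\cF,C,0}(\cN_j)}-1\ge 0$ in all cases, and the per-channel inequalities $\gamma_{\cF}(\cN_j)\ge 2^{E^{(1)}_{\cF,C,0}(\cN_j)}-1\ge 0$ may be multiplied termwise to yield
\begin{equation}
    \gamma_{\rm tot}^2=\prod_{j=1}^n\gamma_{\cF}(\cN_j)^2\ge\prod_{j=1}^n\left(2^{E^{(1)}_{\cF,C,0}(\cN_j)}-1\right)^2,
\end{equation}
as claimed, for $\cF\in\{\SEP,\PPT\}$. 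I expect the only genuinely delicate point to be the $\PPT$ case of the middle step: one must check that rounding to $k=2\lceil c_j\rceil$ is simultaneously large enough to keep $p\le 1/k$ inside the regime where $p\cN_j+(1-p)\cM_2$ remains PPT-preserving, yet small enough that the overhead $2c_j+1$ still dominates $2\lceil c_j\rceil-1$, and that the overlap bound $\tr[\Phi_k\tau]\le 1/k$ indeed persists under the relaxation from separable to PPT states. Everything else is multiplicativity bookkeeping, which is routine once one knows tensor products of PPT- (resp.\ separable-) preserving channels stay PPT- (resp.\ separable-) preserving.
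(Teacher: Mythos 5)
Your proof is correct and has the same skeleton as the paper's: reduce $\gamma_{\rm tot}^2$ to $\prod_j\gamma_{\cF}(\cN_j)^2$ via the chain $\LOCC\subseteq\SEP\subseteq\PPT$ (the paper simply writes $\gamma_{\rm tot}^2=\prod_j\gamma_{\LOCC}(\cN_j)^2\geq\prod_j\gamma_{\cF}(\cN_j)^2$ and does not bother re-deriving the product structure from tensored QPDs as you do), and then invoke a one-shot per-gate bound. For $\cF=\SEP$ you and the paper both use Lemma~\ref{app_lem:gammaSEP_1shot} verbatim. Where you genuinely diverge is the $\PPT$ factor: the paper cites Lemma~\ref{app_lem:gammaPPT_1shot}, i.e.\ $\gamma_{\PPT}(\cN_j)\geq 2^{LN_{\max}(\cN_j)}-1\geq 2^{E^{\PPT}_{C,0}(\cN_j)}-1$, whose right-hand side involves the regularized rather than the one-shot cost, so matching it to the $E^{(1)}_{\PPT,C,0}(\cN_j)$ actually appearing in the corollary implicitly relies on the relation between $LN_{\max}$ and the one-shot PPT cost from the cited literature. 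You instead transplant the operational construction of the SEP lemma: build $\Lambda$ from the optimal QPD, use the singlet-fraction bound $\tr[\Phi_k\tau]\leq 1/k$, which indeed survives the relaxation from separable to PPT ancilla states since $\tr[\Phi_k\tau]=\frac{1}{k}\tr\bigl[F_{\bar A\bar B}\,\tau^{T_{\bar B}}\bigr]\leq\frac{1}{k}$ for the swap operator $F_{\bar A\bar B}$ and $\tau^{T_{\bar B}}\succeq 0$, and read off $E^{(1)}_{\PPT,C,0}(\cN_j)\leq\log(2\lceil c_j\rceil)$. This is more self-contained and targets the one-shot quantity directly, which is what the corollary needs. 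One point you should make explicit: your $\Lambda$ is shown to map PPT states to PPT states, not to be \emph{completely} PPT preserving; its Choi operator need not have positive partial transpose, because the measurement effect $\Phi_k$ satisfies $\Phi_k^{T_{\bar B}}=\frac{1}{k}F_{\bar A\bar B}\not\succeq 0$, while the paper's preliminaries define PPT channels by the Choi condition. This is the same convention slippage the paper itself makes in the SEP lemma (where $\Lambda$ is only shown to lie in $\mathrm{SEPP}$, the separability-preserving class), so your argument closes provided you state that $E^{(1)}_{\PPT,C,0}$ is defined over PPT-preserving operations; if the completely-PPT-preserving class is insisted upon, this step needs an additional argument.
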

Taking the definition of $\gamma_{\rm tot}$, we have,
\begin{equation}
\begin{aligned}
    \gamma_{\rm tot}^2 &= \prod_{j=1}^n \gamma_{\LOCC}(\cN_j)^2 \geq \prod_{j=1}^n \gamma_{\cF}(\cN_j)^2 \geq \prod_{j=1}^n \left(2^{ E^{(1)}_{\cF,C,0}(\cN_j)} - 1\right)^2.
\end{aligned}
\end{equation}
The last inequality holds due to the Lemma derived in the previous sections. Given these nonlocal channels, by assumption, not all of them own zero one-shot exact entanglement cost with respect to $\cF$, and we denote the maximal one-shot exact cost among $\cN_j$s as $\widehat{E}_{\cF,C}$. Then the above bound reduces to the situation of copies of a fixed channel as
\begin{equation}
    \gamma_{\rm tot}^2 \geq  2^{2\widehat{E}_{\cF,C}} - 2^{\widehat{E}_{\cF,C} + 1} + 1.
\end{equation}
Therefore, in the large dimension limit, the total number of samples required scales as $\cO(4^{\widehat{E}_{\cF,C}})$. Assuming the two parties of the system are isomorphic with dimension $d_A = d_B = d$, particularly for a qubit system, $d = 2^N$ where $N$ is the number of qubits in the half circuit. The one-shot exact entanglement cost of the bipartite channels could grow in $\log(d)$, which then leads to the scaling of $\gamma_{\rm tot}^2$ as $\cO(\exp(N))$. Such an exponential growth in the sampling cost leads to the fundamental limitation of practically realizing the circuit knitting method applies to $n$ nonlocal noisy gates on NISQ devices, which has been also pointed out in~\cite{Piveteau2022circuit}.

\end{document}